\long\def\BL{\textsc{Baseline}}
\long\def\LM{\textsc{Limit}}
\long\def\LMS{\textsc{Limit-SS}}
\long\def\LMC{\textsc{Limit-CB}}
\newcommand{\tn}{\tabularnewline}
\newcommand{\INDSTATE}[1][1]{\STATE\hspace{#1\algorithmicindent}}
\newcommand\fnurl[2]{%
\href{#2}{#1}\footnote{\url{#2}}%
}
\newfont{\mycrnotice}{ptmr8t at 7pt}
\newfont{\myconfname}{ptmri8t at 7pt}
\title{Differentially Private Publication of Location Entropy}
\def\sharedaffiliation{%
\end{tabular}
\begin{tabular}{c}}
    \author{
      \alignauthor Hien To \thanks{These authors contributed equally to this work.}\\      
      \email\normalsize{hto@usc.edu}
      \alignauthor Kien Nguyen \footnotemark[1] \\
      \email\normalsize{kien.nguyen@usc.edu}
      \alignauthor Cyrus Shahabi \\
      \email\normalsize{shahabi@usc.edu}
      \sharedaffiliation
       \affaddr{Department of Computer Science, University of Southern California}\\
       \affaddr{Los Angeles, CA 90089}\\
          }
\begin{document}

\CopyrightYear{2016}
\setcopyright{acmcopyright}
\conferenceinfo{SIGSPATIAL'16,}{October 31-November 03, 2016,
Burlingame, CA, USA}
\isbn{978-1-4503-4589-7/16/10}\acmPrice{\$15.00}
\doi{http://dx.doi.org/10.1145/2996913.2996985}

\maketitle
\begin{abstract}

Location entropy (LE) is a popular metric for measuring the popularity of various locations (e.g., points-of-interest). Unlike other metrics computed from only the number of (unique) visits to a location, namely {\em frequency}, LE also captures the {\em diversity} of the users' visits, and is thus more accurate than other metrics. Current solutions for computing LE require full access to the past visits of users to locations, which poses privacy threats. This paper discusses, for the first time, the problem of perturbing location entropy for a set of locations according to differential privacy. The problem is challenging because removing a single user from the dataset will impact multiple records of the database; i.e., all the visits made by that user to various locations. Towards this end, we first derive non-trivial, tight bounds for both local and global sensitivity of LE, and show that to satisfy $\epsilon$-differential privacy, a large amount of noise must be introduced, rendering the published results useless.  Hence, we propose a thresholding technique to limit the number of users' visits, which significantly reduces the perturbation error but introduces an approximation error. To achieve better utility, we extend the technique by adopting two weaker notions of privacy: smooth sensitivity (slightly weaker) and crowd-blending (strictly weaker). Extensive experiments on synthetic and real-world datasets show that our proposed techniques preserve original data distribution without compromising location privacy.

\end{abstract}
\category{H.2.4}{Database Management}{Database Applications}[Spatial databases and GIS]
\category{H.1.1}{Models and Principles}{Systems and Information Theory}[Information theory]
\keywords{Differential Privacy, Location Entropy}

\section{Introduction}

Due to the pervasiveness of GPS-enabled mobile devices and the popularity of location-based services such as mapping and navigation apps (e.g., Google Maps, Waze), or spatial crowdsourcing apps (e.g., Uber, TaskRabbit), or apps with geo-tagging (e.g., Twitter, Picasa, Instagram, Flickr), or check-in functionality (e.g., Foursquare, Facebook), numerous industries are now collecting fine-grained location data from their users. While the collected location data can be used for many commercial purposes by these industries (e.g., geo-marketing), other companies and non-profit organizations (e.g., academia, CDC) can also be empowered if they can use the location data for the greater good (e.g., research, preventing the spread of disease). Unfortunately, despite the usefulness of the data, industries do not publish their location data due to the sensitivity of their users' location information. However, many of these organizations do not need access to the raw location data but aggregate or processed location data would satisfy their need.

One example of using location data is to measure the popularity of a location that can be used in many application domains such as public health, criminology, urban planning, policy, and social studies. One accepted metric to measure the popularity of a location is location entropy (or LE for short).  LE captures both the frequency of visits (how many times each user visited a location) as well as the diversity of visits (how many unique users visited a location) without looking at the functionality of that location; e.g., is it a private home or a coffee shop? Hence, LE has shown that it is able to better quantify the popularity of a location as compared to the number of unique visits or the number of check-ins to the location~\cite{cranshaw2010bridging}. For example, ~\cite{cranshaw2010bridging} shows that LE is more successful in accurately predicting friendship from location trails over simpler models based only on the number of visits.
LE is also used to improve online task assignment in spatial crowdsourcing~\cite{kazemi2012geocrowd,to2015server} by giving priority to workers situated in less popular locations because there may be no available worker visiting those locations in the future.

Obviously, LE can be computed from raw location data collected by various industries; however, the raw data cannot be published due to serious location privacy implications~\cite{ghinita2008private,DeMontjoye2013locationunique,to2014framework}. Without privacy protection, a malicious adversary can stage a broad spectrum of attacks such as physical surveillance and stalking, and breach of sensitive information such as an individual's health issues (e.g., presence in a cancer treatment center), alternative lifestyles, political and religious preferences (e.g., presence in a church). Hence, in this paper we propose an approach based on differential privacy (DP)~\cite{dwork2006differential} to publish LE for a set of locations without compromising users' raw location data. DP has emerged as the de facto standard with strong protection guarantees for publishing aggregate data. It has been adapted by major industries for various tasks without compromising individual privacy, e.g., data analytics with Microsoft [15], discovering users' usage patterns with \fnurl{Apple}{https://www.wired.com/2016/06/apples-differential-privacy-collecting-data/}, or crowdsourcing statistics from end-user client software~\cite{erlingsson2014rappor}
and training of deep neural networks~\cite{abadi2016deep} 
with Google.
DP ensures that an adversary is not able to reliably learn from the published sanitized data whether or not a particular individual is present in the original data, regardless of the adversary's prior knowledge.

It is sufficient to achieve $\epsilon$-DP ($\epsilon$ is privacy loss) by adding Laplace \textit{noise} with mean zero and scale proportional to the \textit{sensitivity} of the query (LE in this study)~\cite{dwork2006differential}. The sensitivity of LE is intuitively the maximum amount that one individual can impact the value of LE. The higher the sensitivity, the more noise must be injected to guarantee $\epsilon$-DP.
Even though DP has been used before to compute Shannon Entropy~\cite{Blum:2005:SuLQ} (the formulation adapted in LE), the main challenge in differentially private publication of LE is that adding (or dropping) a single user from the dataset would impact multiple entries of the database, resulting in a high sensitivity of LE. To illustrate, consider a user that has contributed many visits to a single location; thus, adding or removing this user would significantly change the value of LE for that location. Alternatively, a user may contribute visits to multiple locations and hence impact the entropy of all those visited locations.
Another unique challenge in publishing LE (vs. simply computing the Shannon Entropy) is due to the presence of skewness and sparseness in real-world location datasets where the majority of locations have small numbers of visits.

Towards this end, we first compute a non-trivial tight bound for the global sensitivity of LE. Given the bound, a sufficient amount of noise is introduced to guarantee $\epsilon$-DP.
However, the injected noise linearly increases with the maximum number of locations visited by a user (denoted by $M$) and monotonically increases with the maximum number of visits a user contributes to a location (denoted by $C$), and such an excessive amount of noise renders the published results useless.
We refer to this algorithm as {\BL}.
Accordingly, we propose a technique, termed {\LM}, to limit user activity by thresholding $M$ and $C$, which significantly reduces the perturbation error. Nevertheless, limiting an individual's activity entails an approximation error in calculating LE. These two conflicting factors require the derivation of appropriate values for $M$ and $C$ to obtain satisfactory results. We empirically find such optimal values.

Furthermore, to achieve a better utility, we extend {\LM} by adopting two weaker notions of privacy: smooth sensitivity~\cite{Nissim:2007:SmoothSensitivity} and crowd-blending~\cite{gehrke2012crowd} (strictly weaker). We denote the techniques as {\LMS} and {\LMC}, respectively.
{\LMS} provides a slightly weaker privacy guarantee, i.e., $(\epsilon,\delta)$-differential privacy by using local sensitivity with much smaller noise magnitude. %
We propose an efficient algorithm to compute the local sensitivity of a particular location that depends on $C$ and the number of users visiting the location (represented by $n$) such that the local sensitivity of all locations can be precomputed, regardless of the dataset.
Thus far, we publish entropy for all locations; however, the ratio of noise to the true value of LE (noise-to-true-entropy ratio) is often excessively high when the number of users visiting a location $n$ is small (i.e., the entropy of a location is bounded by $\log(n)$). For example, given a location visited by only two users with an equal number of visits (LE is $\log 2$), removing one user from the database drops the entropy of the location to zero. To further reduce the noise-to-true-entropy ratio, {\LMC} aims to publish the entropy of locations with at least $k$ users ($n\ge k$) and suppress the other locations. By thresholding $n$, the global sensitivity of LE significantly drops, implying much less noise. We prove that {\LMC} satisfies ($k,\epsilon$)-crowd-blending privacy.

We conduct an extensive set of experiments on both synthetic and real-world datasets. We first show that the truncation technique ({\LM}) reduces the global sensitivity of LE by two orders of magnitude, thus greatly enhancing the utility of the perturbed results. We also demonstrate that {\LM} preserves the original data distribution after adding noise.
Thereafter, we show the superiority of {\LMS} and {\LMC} over {\LM} in terms of achieving higher utility (measured by KL-divergence and mean squared error metrics). Particularly, {\LMC} performs best on sparse datasets while {\LMS} is recommended over {\LMC} on dense datasets. We also provide insights on the effects of various parameters: $\epsilon,C,M,k$ on the effectiveness and utility of our proposed algorithms. Based on the insights, we provide a set of guidelines for choosing appropriate algorithms and parameters.

The remainder of this paper is organized as follows. In Section~\ref{sec:problem}, we define the problem of publishing LE according to differential privacy. Section~\ref{sec:prelim} presents the preliminaries. Section~\ref{sec:dple} introduces the baseline solution and our thresholding technique. Section~\ref{sec:relaxation} presents our utility enhancements by adopting weaker notions of privacy. Experimental results are presented in Section~\ref{sec:experiment}, followed by a survey of related work in Section~\ref{sec:related}, and conclusions in Section~\ref{sec:conclude}.

\section{Problem Definition}
\label{sec:problem}
In this section we present the notations and the formal definition of the problem.

Each location $l$ is represented by a point in two-dimensional space and a unique identifier $l$ $(-180 \le l_{lat} \le 180)$ and $(-90 \le l_{lon} \le 90)$\footnote{\small{$l_{lat},l_{lon}$ are real numbers with ten digits after the decimal point.}}. Hereafter, $l$ refers to both the location and its unique identifier.
For a given location $l$, let $O_l$ be the set of visits to that location. Thus, $c_l=|O_l|$ is the total number of visits to $l$. Also, let $U_l$ be the set of distinct users that visited $l$, and $O_{l,u}$ be the set of visits that user $u$ has made to the location $l$. Thus, $c_{l,u}=|O_{l,u}|$ denotes the number of visits of user $u$ to location $l$. The probability that a random draw from $O_l$ belongs to $O_{l,u}$ is $p_{l,u}=\frac{|c_{l,u}|}{|c_l|}$, which is the fraction of total visits to $l$ that belongs to user $u$. The location entropy for $l$ is computed from Shannon entropy \cite{shannon1948mathematical} as follows:
\begin{equation}
\label{eq:le}
H(l) = H(p_{l,u_1}, p_{l,u_2}, \dots, p_{l,u_{|U_l|}}) = - \sum_{u \in U_l} p_{l,u} \log p_{l,u}
\end{equation}
In our study the natural logarithm is used. A location has a higher entropy when the visits are distributed more evenly among visiting users, and vice versa.
Our goal is to publish location entropy of all locations $L=\{l_1, l_2, ..., l_{|L|}\}$, where each location is visited by a set of users $U=\{u_1, u_2, ..., u_{|U|}\}$, while preserving the location privacy of users.
Table~\ref{tab:notation} summarizes the notations used in this paper.

\begin{table}
\begin{center}
\footnotesize
\scriptsize{
\begin{tabular}{ l | l}
\hline
$l,L,|L|$ & a location, the set of all locations and its cardinality \tn
\hline
$H(l)$ & location entropy of location $l$ \tn
\hline
$\hat{H}(l)$ & noisy location entropy of location $l$ \tn
\hline
$\Delta H_l$ & sensitivity of location entropy for location $l$ \tn
\hline
$\Delta H$ & sensitivity of location entropy for all locations \tn
\hline
$O_l$ & the set of visits to location $l$ \tn
\hline
$u,U,|U|$ & a user, the set of all users and its cardinality \tn
\hline
$U_l$ & the set of distinct users who visits $l$ \tn
\hline
$O_{l,u}$ & the set of visits that user $u$ has made to location $l$ \tn
\hline
$c_l$  & the total number of visits to $l$ \tn
\hline
$c_{l,u}$  & the number of visits that user $u$ has made to location $l$  \tn
\hline
$C$ & maximum number of visits of a user to a location \tn
\hline
$M$ & maximum number of locations visited by a user \tn
\hline
$p_{l,u}$ & the fraction of total visits to $l$ that belongs to user $u$ \tn
\end{tabular}}
\caption{Summary of notations.}\label{tab:notation}
\vspace{-15pt}
\end{center}
\end{table}
\section{Preliminaries}
\label{sec:prelim}

We present Shannon entropy properties and the differential privacy notion that will be used throughout the paper.

\subsection{Shannon Entropy}

Shannon \cite{shannon1948mathematical} introduces entropy as a measure of the uncertainty in a random variable with a probability distribution $U = (p_1, p_2,...,p_{|U|})$:
\begin{equation}
\label{eq:entropy}
H(U) = -\displaystyle\sum_{i}p_{i} \log p_i
\end{equation}
where $\sum_{i} p_i=1$. $H(U)$ is maximal if all the outcomes are equally likely:
\begin{equation}
\label{eq:entropy_maximum}
H(U) \le H(\frac{1}{|U|}, ..., \frac{1}{|U|})  = \log |U|
\end{equation}
\textbf{Additivity Property of Entropy:} 
Let $U_1$ and $U_2$ be non-overlapping partitions of a database $U$ including users who contribute visits to a location $l$, and $\phi_1$ and $\phi_2$ are probabilities that a particular visit belongs to partition $U_1$ and $U_2$, respectively.
Shannon discovered that using logarithmic function preserves the  \emph{additivity} property of entropy: 
\begin{equation}
\label{eq:entropy_additivity}
H(U) = \phi_1H(U_1) + \phi_2H(U_2) + H(\phi_1, \phi_2)  \nonumber
\end{equation}
Subsequently, adding a new person $u$ into $U$ changes its entropy to: 
\begin{equation}
\label{eq:entropy_add}
\centering H(U^+) = \frac{c_l}{c_l + c_{l,u}}H(U) + H\Big(\frac{c_{l,u}}{c_l+c_{l,u}}, \frac{c_l}{c_l+c_{l,u}}\Big) 
\end{equation}
where $U^+ = U \cup u$ and $c_l$ is the total number of visits to $l$, and $c_{l,u}$ is the number of visits to $l$ that is contributed by user $u$. Equation (\ref{eq:entropy_add}) can be derived from Equation (\ref{eq:entropy_additivity}) if we consider $U^+$ includes two non-overlapping partitions $u$ and $U$ with associated probabilities $\frac{c_{l,u}}{c_l+c_{l,u}}$ and $\frac{c_l}{c_l+c_{l,u}}$. We note that the entropy of a single user is zero, i.e., $H(u) = 0$.

Similarly, removing a person $u$ from $U$ changes its entropy as follows:
\begin{equation}
\label{eq:entropy_remove}
\centering H(U^-) = \frac{c_l}{c_l - c_{l,u}}\bigg(H(U) - H\Big(\frac{c_{l,u}}{c_l}, \frac{c_l-c_{l,u}}{c_l}\Big)\bigg)
\end{equation}
where $U^- = U \setminus \{u\}$.

\subsection{Differential Privacy}

{\em Differential privacy (DP)}~\cite{dwork2006differential} has emerged as the de facto standard in data privacy, thanks to its strong protection guarantees rooted in statistical analysis. DP is a {\em semantic} model which provides protection against realistic adversaries with background information. Releasing data according to DP ensures that an adversary's chance of inferring any information about an individual from the sanitized data will not substantially increase, regardless of the adversary's prior knowledge. DP ensures that the adversary does not know whether an individual is present or not in the original data. DP is formally defined as follows.
\newtheorem{theorem}{Theorem}
\newtheorem{definition}{Definition}
\newtheorem{lemma}{Lemma}
\begin{definition}
\newtheorem{differential_privacy}[definition]{Definition}\label{differential_privacy}
\textsc{$\epsilon$-indistinguishability}~\cite{dwork2006calibrating}
Consider that a database produces a set of query results $\hat{D}$ on the set of queries $Q$ = $\{q_1, q_2, \ldots, q_{|Q|}\}$, and let $\epsilon>0$ be an arbitrarily small real constant. Then, transcript $U$ produced by a randomized algorithm $A$ satisfies $\epsilon$-indistinguishability if for every pair of sibling datasets $D_1$, $D_2$ that differ in only one record, it holds that
$$\ln \frac{Pr[Q(D_1) = U]}{Pr[Q(D_2) = U]} \le \epsilon$$
\end{definition}
In other words, an attacker cannot reliably learn whether the transcript was obtained by answering the query set $Q$ on dataset $D_1$ or $D_2$. Parameter $\epsilon$ is called {\em privacy budget}, and specifies the amount of protection required, with smaller values corresponding to stricter privacy protection. To achieve $\epsilon$-indistinguishability, DP injects noise into each query result, and the amount of noise required is proportional to the {\em sensitivity} of the query set $Q$, formally defined as:
\newtheorem{sensitivity}[definition]{Definition}\label{sensitivity}
\begin{sensitivity}[$L_1$-Sensitivity]~\cite{dwork2006calibrating}
Given any arbitrary sibling datasets $D_1$ and $D_2$, the sensitivity of query set $Q$ is the maximum change in their query results.
$$\sigma(Q) = \max_{D_1, D_2}||Q(D_1)-Q(D_2)||_1$$
\end{sensitivity}
An essential result from~\cite{dwork2006calibrating} shows that a sufficient condition to achieve DP with parameter $\epsilon$ is to add to each query result randomly distributed Laplace noise with mean 0 and scale $\lambda = \sigma(Q)/\epsilon$.

\section{Private Publication of LE}
\label{sec:dple}

In this section we present a baseline algorithm based on a global sensitivity of LE~\cite{dwork2006calibrating} and then introduce a thresholding technique to reduce the global sensitivity by limiting an individual's activity.

\subsection{Global Sensitivity of LE}
\label{sec:global_sen}

To achieve $\epsilon$-differential privacy, we must add noise proportional to the global sensitivity (or sensitivity for short) of LE. Thus, to minimize the amount of injected noise, we first propose a tight bound for the sensitivity of LE, denoted by $\Delta H$.
$\Delta H$ represents the maximum change of LE across all locations when the data of one user is added (or removed) from the dataset. With the following theorem, the sensitivity bound is a function of the maximum number of visits a user contributes to a location, denoted by $C$ ($C\ge 1$).
\begin{theorem}
\label{theorem:globalBoundOfDeltaH}
Global sensitivity of location entropy is
\begin{align}
\Delta H = \max\left\{\log 2, \log C - \log (\log C) - 1\right\} \nonumber
\end{align}
\end{theorem}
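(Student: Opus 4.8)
The plan is to reduce the claim to bounding, at a single location $l$, the change in entropy $|H(U^+) - H(U)|$ caused by inserting one user $u$, and then to maximize this change over all admissible configurations. Writing $m = c_l$ for the total number of visits to $l$ before the insertion and $c = c_{l,u} \le C$ for the visits contributed by $u$, the additivity identity (Equation~\ref{eq:entropy_add}) gives $H(U^+) = \frac{m}{m+c}H(U) + H(\frac{c}{m+c}, \frac{m}{m+c})$, so that with $p = \frac{c}{m+c}$ the signed change is $H(U^+) - H(U) = H(\frac{c}{m+c},\frac{m}{m+c}) - pH(U)$. Since a sibling pair consists of a dataset and its one-user extension, the sensitivity at $l$ is the maximum of this quantity in absolute value, and I would treat the positive part (the increase) and the negative part (the decrease) separately, showing that they give the two terms of the $\max$.

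For the increase, I would note that $H(U^+) - H(U) = H(\frac{c}{m+c},\frac{m}{m+c}) - pH(U) \le \log 2$, because the binary entropy is at most $\log 2$ and $pH(U) \ge 0$; equality is attained by inserting a user with $c$ visits into a location previously holding a single user with the same count $c$, where $H(U)=0$ and $p = 1/2$. This yields the first term for every $C \ge 1$.

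The decrease is the substantive case. Here I would upper-bound $pH(U) - H(\frac{c}{m+c},\frac{m}{m+c})$ by maximizing $H(U)$ for fixed $m$: among all user distributions summing to $m$ visits (each at most $C$), the entropy is largest when the $m$ visits are split among $m$ distinct users, giving $H(U) \le \log m$ by (\ref{eq:entropy_maximum}). Substituting this and expanding the binary entropy, the objective collapses to the clean form $\frac{c\log c}{m+c} - \log(1 + c/m)$. Writing $x = c/m$ rewrites this as $\frac{x}{1+x}\log c - \log(1+x)$, which for fixed $x$ is increasing in $c$, so the optimum takes $c = C$; differentiating $g(x) = \frac{x}{1+x}\log C - \log(1+x)$ and solving $g'(x)=0$ gives $1 + x = \log C$, whence the maximal decrease is $g(\log C - 1) = \log C - \log(\log C) - 1$. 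Taking the larger of the two cases gives $\Delta H = \max\{\log 2,\ \log C - \log(\log C) - 1\}$.

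I expect the decrease case to be the main obstacle, for two reasons. First, one must justify that replacing $H(U)$ by its maximum $\log m$ and the inserted count by $C$ \emph{jointly} attains the supremum; this relies on the monotonicity-in-$c$ observation available only after the substitution $x = c/m$, rather than on single-variable calculus alone. Second, the optimizing value $m = C/(\log C - 1)$ is generally non-integral and is positive only when $C > e$, so I would present $\log C - \log(\log C) - 1$ as a tight upper bound (the continuous maximum dominates the maximum over integer $m$) and observe that for small $C$, where this expression drops below $\log 2$, the increase term $\log 2$ governs the sensitivity, so that the stated $\max$ remains a valid and essentially tight bound across the whole range of $C$.
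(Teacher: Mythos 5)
Your proposal is correct, and its computational heart matches the paper's: the additivity identity (Equation~\ref{eq:entropy_add}), the extremal configuration of a uniform crowd of single-visit users joined by one user with $C$ visits, and the same critical point $m = C/(\log C - 1)$ yielding $\log C - \log(\log C) - 1$, set against $\log 2$ for the increase direction. The route is organized differently, though. The paper factors the argument through Theorem~\ref{theorem:boundOfDeltaH}: it first derives the local sensitivity $\Delta H_l$ of a location with a fixed number $n$ of users, treating the add and remove cases separately (hence the paired terms in $\frac{n}{n+C}$ and $\frac{n-1}{n-1+C}$ there), and then obtains the global bound by differentiating that per-$n$ bound with respect to $n$. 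You bypass the per-$n$ lemma entirely: you absorb removals into additions by noting that sibling pairs are exactly dataset/one-user-extension pairs, split by the sign of the change, bound $H(U) \le \log m$ outright, and run one joint optimization via the substitution $x = c/m$. Your route is shorter and self-contained for this theorem; the paper's costlier two-step route buys Theorem~\ref{theorem:boundOfDeltaH} as a standalone artifact, which it needs later anyway---the per-$n$ quantity $LS(C,n)$ drives the smooth-sensitivity precomputation (Algorithm~\ref{alg:compute_ss}) and the crowd-blending bound (Theorem~\ref{theorem:globalBoundOfDeltaH_limit_n}). The two caveats you flag---non-integrality of the optimizing $m$, and misbehavior of $\log C - \log(\log C) - 1$ for small $C$ (it is undefined at $C=1$)---afflict the paper's stated formula equally, so they are presentation issues rather than gaps in your argument relative to the paper's.
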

\begin{proof}
We prove this theorem by first deriving a tight bound for the sensitivity of a particular location $l$ (visited by $n$ users), denoted by $\Delta H_l$ (Theorem~\ref{theorem:boundOfDeltaH}). 
The bound is a function of $C$ and $n$.
Thereafter, we generalize the bound to hold for all locations as follows. We take the derivative of the bound derived for $\Delta H_l$ with respect to variable $n$ and find the extremal point where the bound is maximized.
The detailed proof 
can be found in our technical report~\cite{to2016dple}.
\end{proof}

\begin{theorem} %
\label{theorem:boundOfDeltaH}
Local sensitivity of a particular location $l$ with $n$ users is:

\begin{minipage}[t]{\linewidth}
\begin{itemize}
\item $\log 2$ when $n = 1$
\item $\log \frac{n+1}{n}$ when $C = 1$
\item %
$\max\{\log \frac{n - 1}{n - 1 + C} + \frac{C}{n - 1 + C} \log C, \log \frac{n}{n + C} + \frac{C}{n+ C} \log C,\\  \log (1 + \frac{1}{\exp(H(\mathcal{C} \setminus c_u))}) \}$
where $C$ is the maximum number of visits a user contributes to a location ($C\ge 1$)
and $H(\mathcal{C} \setminus c_u) = \log(n-1) - \frac{\log C}{C - 1} + \log \Big( \frac{\log C}{C - 1} \Big) + 1$,
when $n > 1, c > 1$.
\end{itemize}
\end{minipage}
\end{theorem}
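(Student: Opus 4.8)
The plan is to treat the local sensitivity of location $l$ as the largest absolute change $|H(l)-H'(l)|$ of its entropy over all neighbouring databases, where a neighbour is obtained by inserting or deleting the complete record of a single user at $l$. Since the goal is a bound depending only on $n$ and $C$ (so that it can be precomputed), I would take the worst case over all visit-count configurations of the $n$ users, and over the count $c_u\in[1,C]$ of the added or removed user. Both directions of change must be examined, using the insertion identity~\eqref{eq:entropy_add} and the deletion identity~\eqref{eq:entropy_remove}: inserting a ``moderate'' user tends to raise the entropy, whereas deleting a ``dominant'' user also moves it, and the sensitivity is the maximum of the two.

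First I would dispose of the two degenerate regimes. For $n=1$ the single-user location has $H(l)=0$, so the only way to change the entropy is to add a second user; by additivity the new value is the binary entropy $H(q,1-q)$ with $q=c_{l,u}/(c_l+c_{l,u})$, which is maximised at the balanced split and gives $\log 2$. For $C=1$ every user contributes exactly one visit, so any location with $m$ users is uniform with $H=\log m$; inserting a user therefore changes the entropy by the pure log-ratio $\log\frac{n+1}{n}$, which is the stated value.

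The core of the proof is the general regime $n>1$, $C>1$. Writing $q$ for the visit fraction of the added/removed user and $H_{\mathrm{rem}}$ for the entropy of the remaining users, the additivity identity reduces the change to the one-parameter expression $f(q)=H(q,1-q)-q\,H_{\mathrm{rem}}$, which I would optimise in two stages. (i) Treating $q$ as continuous and differentiating, $f'(q)=\log\frac{1-q}{q}-H_{\mathrm{rem}}$, so the interior optimum is $q^\ast=1/(1+e^{H_{\mathrm{rem}}})$, at which $f(q^\ast)=\log\!\big(1+e^{-H_{\mathrm{rem}}}\big)$; this is exactly the third candidate, with $H_{\mathrm{rem}}=H(\mathcal{C}\setminus c_u)$. (ii) This interior optimum for $q$ is attainable only while the cap $c_u\le C$ is slack; when it binds, the extremal configuration instead places the added/removed user at the maximum count $C$ and makes the remaining users uniform (one visit each, the entropy-maximising background). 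Direct substitution into~\eqref{eq:entropy_add} and~\eqref{eq:entropy_remove} then produces the first two candidates $\log\frac{n-1}{n-1+C}+\frac{C}{n-1+C}\log C$ (deleting a dominant user, leaving $n-1$ users) and $\log\frac{n}{n+C}+\frac{C}{n+C}\log C$ (inserting a dominant user). The local sensitivity is the maximum of the three.

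The hard part will be step (ii): the joint constrained optimisation over both the count $c_u$ (with the cap that may or may not bind) and the whole remaining distribution. Determining the worst-case remaining distribution that feeds $H(\mathcal{C}\setminus c_u)$ requires a Lagrange/stationarity argument balancing the concave per-user entropy contributions against the count cap, and this is what introduces the $\frac{\log C}{C-1}$ terms into $H(\mathcal{C}\setminus c_u)=\log(n-1)-\frac{\log C}{C-1}+\log\!\big(\frac{\log C}{C-1}\big)+1$. I would then verify that the entropy objective, being a sum of concave terms, attains its constrained extrema either at the unique interior critical point (the third candidate) or on the boundary $c_u=C$ with uniform background (the first two candidates), thereby ruling out any other configuration and confirming that the three candidates exhaust the maximum.
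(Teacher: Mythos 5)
Your overall skeleton matches the paper's: the paper's own proof (a sketch, with details deferred to its technical report) proceeds exactly by the add/remove split and the additivity identities, Equations~\ref{eq:entropy_add} and~\ref{eq:entropy_remove}. Your treatment of the degenerate cases $n=1$ and $C=1$, your reduction of the general case to $f(q)=H(q,1-q)-q\,H_{\mathrm{rem}}$, the interior optimum $q^\ast = 1/(1+e^{H_{\mathrm{rem}}})$ with value $\log\big(1+e^{-H_{\mathrm{rem}}}\big)$, and the two extremal configurations behind the first two candidates (a $C$-visit user against $n-1$, respectively $n$, single-visit users) are all correct; a minor quibble is that those two candidates arise as maximizers of the entropy-\emph{decrease} direction $-f$ at the endpoint $c_u=C$, not from the cap binding inside the maximization of $f$, though you do land on the right configurations.

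The genuine gap is in your step (ii), precisely where you place ``the hard part,'' and the route you sketch would fail. You propose to find \emph{the worst-case remaining distribution} by a Lagrange/stationarity argument over feasible configurations and then to verify that the extrema occur only at the interior critical point or the uniform boundary. But $H(\mathcal{C}\setminus c_u)=\log(n-1)-\frac{\log C}{C-1}+\log\big(\frac{\log C}{C-1}\big)+1$ is not the entropy of \emph{any} feasible configuration of $n-1$ users with counts in $[1,C]$: entropy is concave in the configuration, so its minimum over the feasible set is at a skewed vertex (one user with $C$ visits, the rest with one visit each), and for instance with $n-1=2$, $C=10$ that vertex has entropy $\approx 0.30$, while the stated formula gives $\approx 0.07$. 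An exact extremal-configuration analysis therefore produces a \emph{different} (smaller) third candidate than the theorem states. The stated expression emerges only after a relaxation: bound each concave term $-p\log p$ from below by its chord on $[1/s, C/s]$, where $s$ is the total count of the remaining users (equivalently, allow a \emph{fractional} number of users to sit at the extreme counts $1$ and $C$), which gives $H_{\mathrm{rem}} \ge \log s - \frac{C\log C}{C-1}\big(1-\frac{n-1}{s}\big)$ for every configuration; then minimize over $s$ as a continuous variable, attaining the minimum at $s=(n-1)\frac{C\log C}{C-1}$, and simplify using $\log C=(C-1)\cdot\frac{\log C}{C-1}$ to obtain exactly the stated $H(\mathcal{C}\setminus c_u)$. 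Since $\log(1+e^{-x})$ is decreasing, this yields the third candidate as a valid \emph{upper bound} on the increase direction. Without this relaxation step your proof cannot arrive at the theorem's formula; correspondingly, the theorem should be read as a (slightly conservative) closed-form bound on the local sensitivity rather than an attained maximum, which is what the paper needs for calibrating noise.
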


\begin{proof}
We prove the theorem considering both cases---when a user is added (or removed) from the database. We first derive a proof for the adding case by using the additivity property of entropy from Equation~\ref{eq:entropy_add}. Similarly, the proof for the removing case can be derived from Equation~\ref{eq:entropy_remove}.
The detailed proofs can be found in 
our technical report~\cite{to2016dple}.
\end{proof}

\textbf{Baseline Algorithm:}
In this section we present a baseline algorithm that publishes location entropy for all locations (see Algorithm~\ref{alg:baseline}).
Since adding (or removing) a single user from the dataset would impact the entropy of all locations he visited, the change of adding (or removing) a user to all locations is bounded by $M_{max}\Delta H$, where $M_{max}$ is the maximum number of locations visited by a user. Thus, Line~\ref{line:add_noise} adds randomly distributed Laplace noise with mean zero and scale $\lambda = \frac{M_{max}\Delta H}{\epsilon}$ to the actual value of location entropy $H(l)$. It has been proved~\cite{dwork2006calibrating} that this is sufficient to achieve differential privacy with such simple mechanism.

\begin{algorithm} [ht]
\caption{\sc Baseline Algorithm}
\small
\begin{algorithmic}[1]
\STATE Input: privacy budget $\epsilon$, a set of locations $L=\{l_1, l_2,...,l_{|L|}\}$, maximum number of visits of a user to a location $C_{max}$, maximum number of locations a user visits $M_{max}$.
\STATE Compute sensitivity $\Delta H$ from Theorem~\ref{theorem:globalBoundOfDeltaH} for $C= C_{max}$.
\STATE For each location $l$ in $L$
\INDSTATE Count \#visits each user made to $l$: $c_{l,u}$ and compute $p_{l,u}$ \label{line:compute_p}
\INDSTATE Compute $H(l)= - \sum_{u \in U_l} p_{l,u} \log p_{l,u}$ \label{line:compute_le}
\INDSTATE Publish noisy LE: $\hat{H}(l)=H(l) + Lap(\frac{M_{max}\Delta H}{\epsilon})$ \label{line:add_noise}
\end{algorithmic}
\label{alg:baseline}
\end{algorithm}

\subsection{Reducing the Global Sensitivity of LE}

\subsubsection{Limit Algorithm}
\label{sec:limit}

\textbf{Limitation of the Baseline Algorithm:}
Algorithm~\ref {alg:baseline} provides privacy; however, the added noise is excessively high, rendering the results useless. 
To illustrate, \hyperref[fig:boundByC]{Figure~\ref{fig:boundByC}} shows the bounds of the global sensitivity (Theorem~\ref{theorem:globalBoundOfDeltaH}) when $C$ varies. The figure shows that the bound monotonically increases when $C$ grows. Therefore, the noise introduced by Algorithm~\ref{alg:baseline} increases as $C$ and $M$ increase. In practice, $C$ and $M$ can be large because a user may have visited either many locations or a single location many times, resulting in large sensitivity.
Furthermore, Figure~\ref{fig:sensitivity_m5} depicts different values of noise magnitude (in log scale) used in our various algorithms by varying the number of users visiting a location, $n$. The graph shows that the noise magnitude of the baseline is too high to be useful (see Table~\ref{tab:datasets}). 

\begin{figure}[ht]
	\begin{minipage}[b]{0.49\linewidth}
		\centering
		\includegraphics[width=1\textwidth]{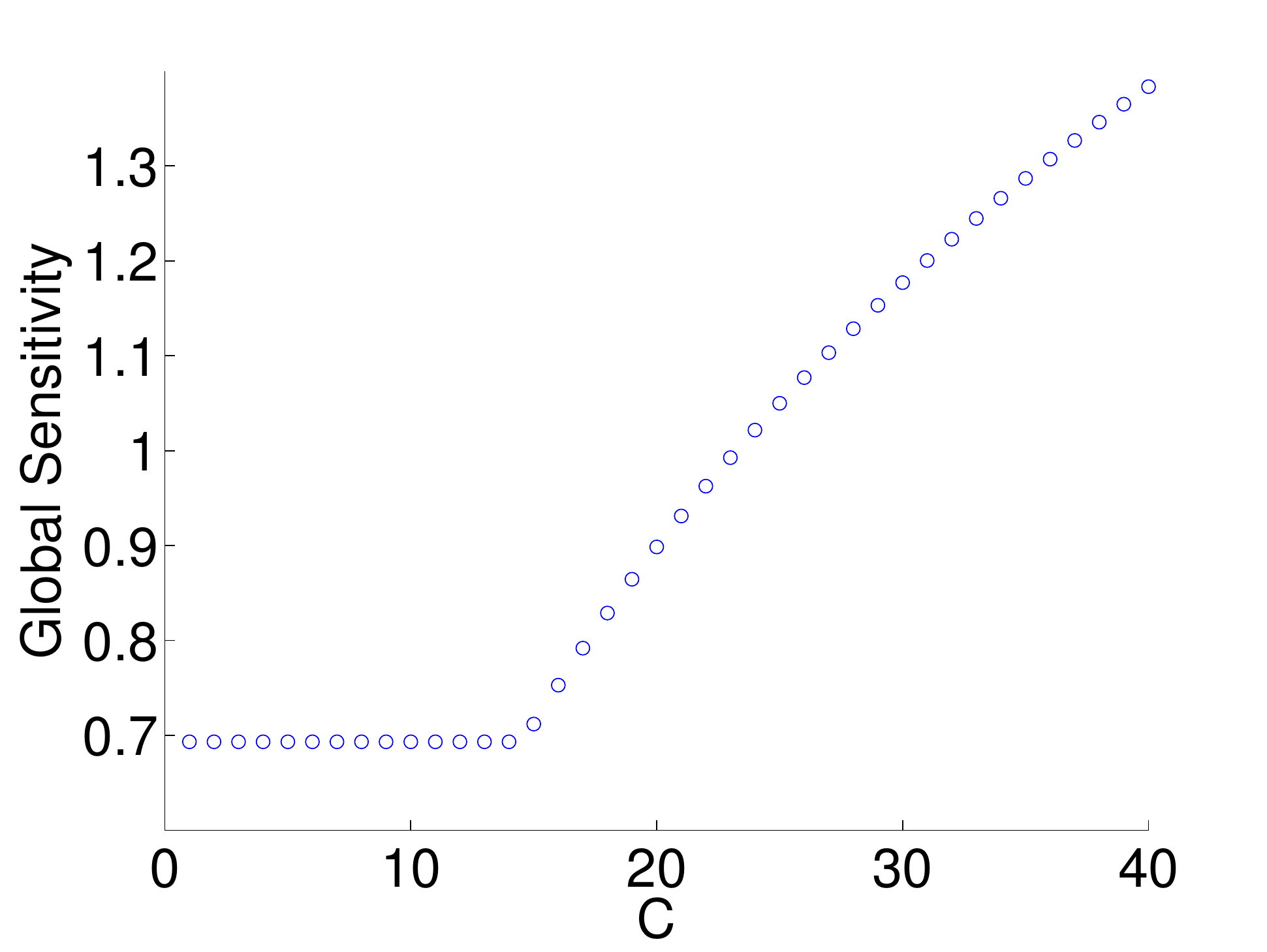}
		\caption{Global sensitivity bound of location entropy when varying $C$.}
		\label{fig:boundByC}
	\end{minipage}
		\hspace{3pt}
	\begin{minipage}[b]{0.49\linewidth}
		\centering
		\includegraphics[width=1\textwidth]{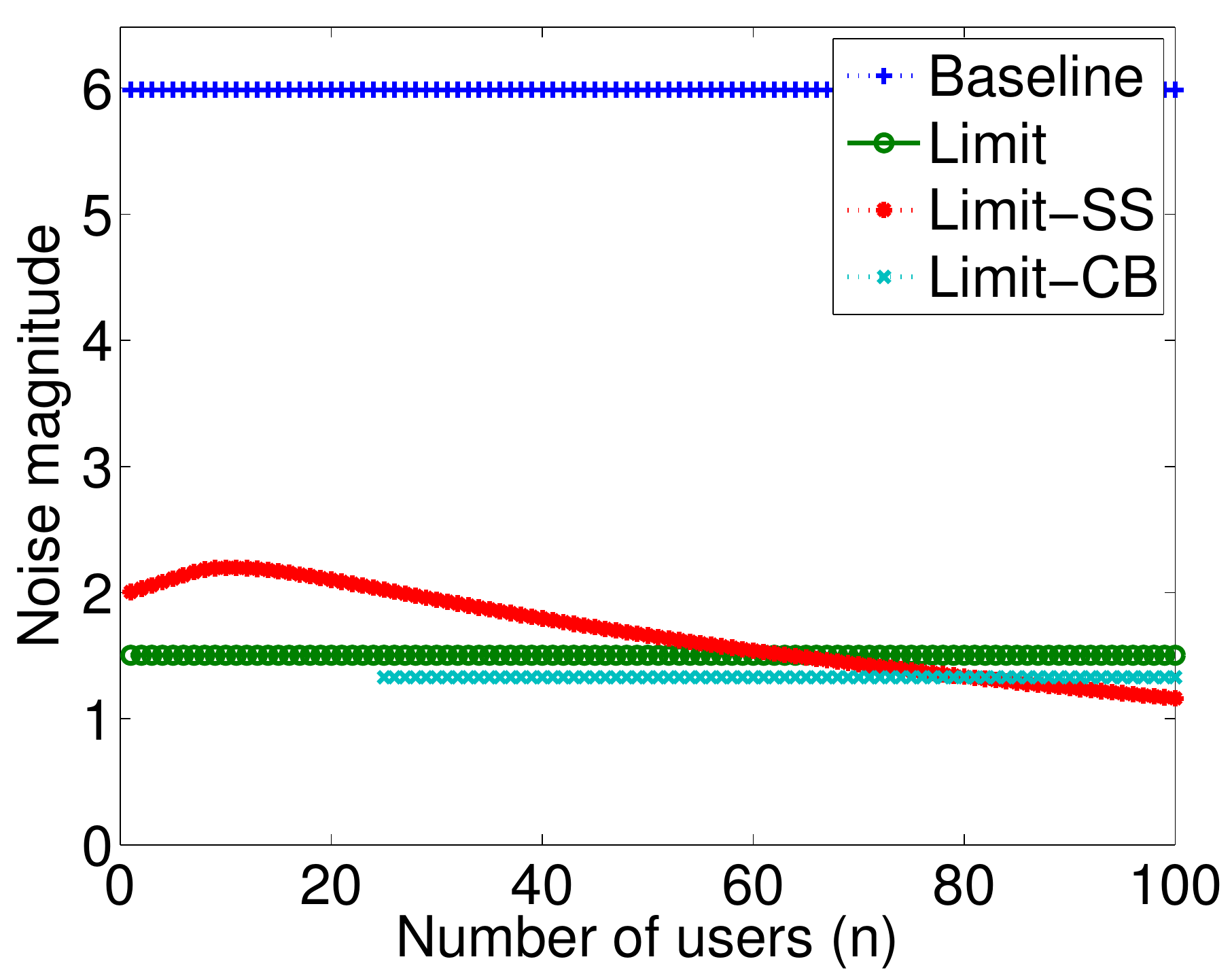}
		\caption{Noise magnitude in natural log scale ($\epsilon=5$, $C_{max}$=1000, $M_{max}$=100, $C$=20, $M$=5, $\delta$=$10^{-8}$, $k$=25).}
		\label{fig:sensitivity_m5}
	\end{minipage}
	
\end{figure}

\textbf{Improving Baseline by Limiting User Activity:}
To reduce the global sensitivity of LE, and inspired by~\cite{Korolova:2009:releasingsearch}, we propose a thresholding technique, named {\LM}, to limit an individual's activity by truncating $C$ and $M$. Our technique is based on the following two observations.
First, Figure~\ref{fig:large_visits} shows the maximum number of visits a user contributes to a location in the  Gowalla dataset that will be used in Section~\ref{sec:experiment} for evaluation. Although most users have one and only one visit, the sensitivity of LE is determined by the worst-case scenario---the maximum number of visits\footnote{\small{This suggests that users tend not to check-in at places that they visit the most, e.g., their homes, because if they did, the peak of the graph would not be at 1.}}.
Second, Figure~\ref{fig:many_locs} shows the number of locations visited by a user. The figure confirms that there are many users who contribute to more than ten locations.

\begin{figure}[ht]
	\begin{minipage}[b]{.49\linewidth}
		\centering
		\includegraphics[width=1\textwidth]{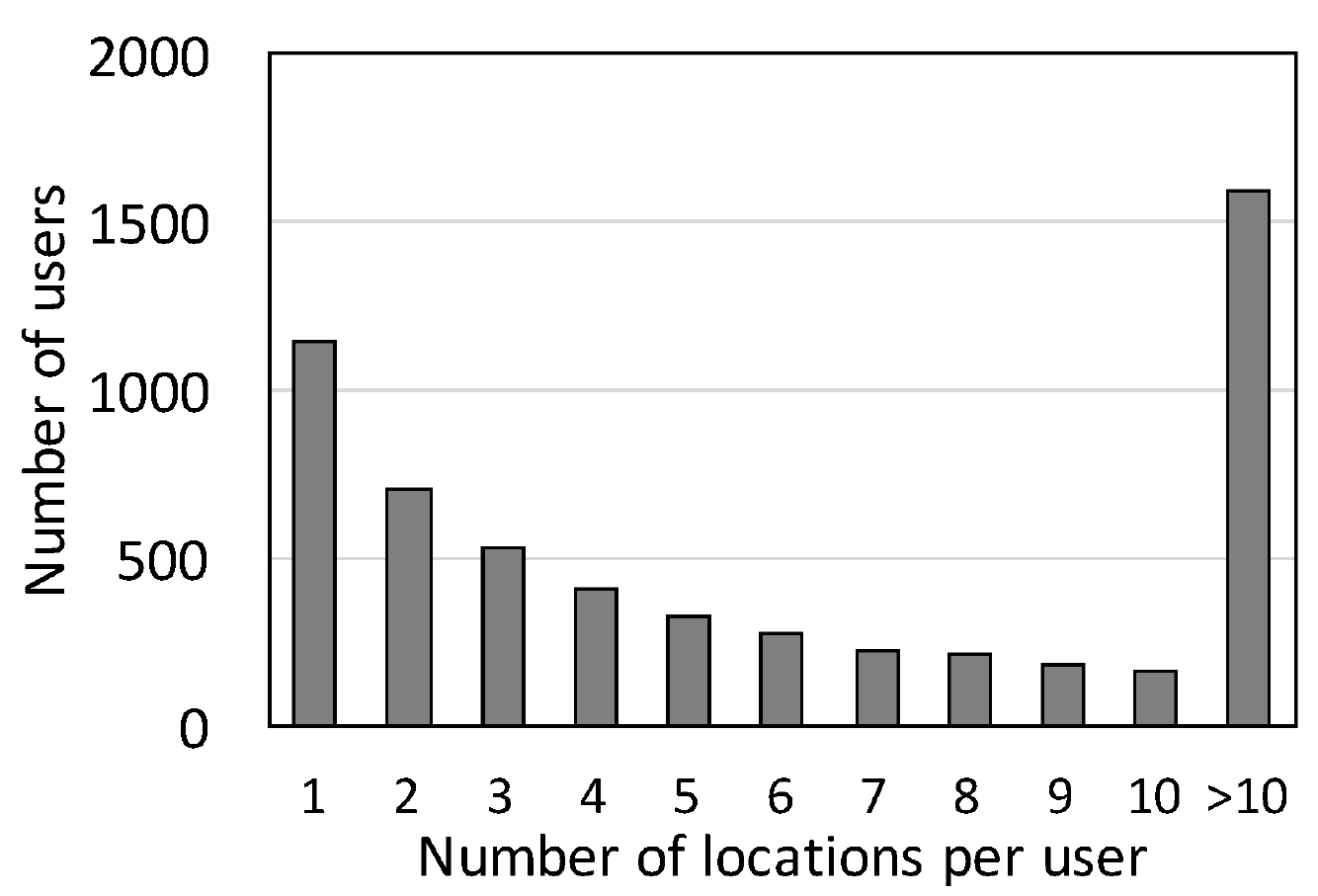}
		\subcaption{A user may visit many locations}
		\label{fig:many_locs}
	\end{minipage}
	\hspace{5pt}
	\begin{minipage}[b]{.49\linewidth}
		\centering
		\includegraphics[width=1\textwidth]{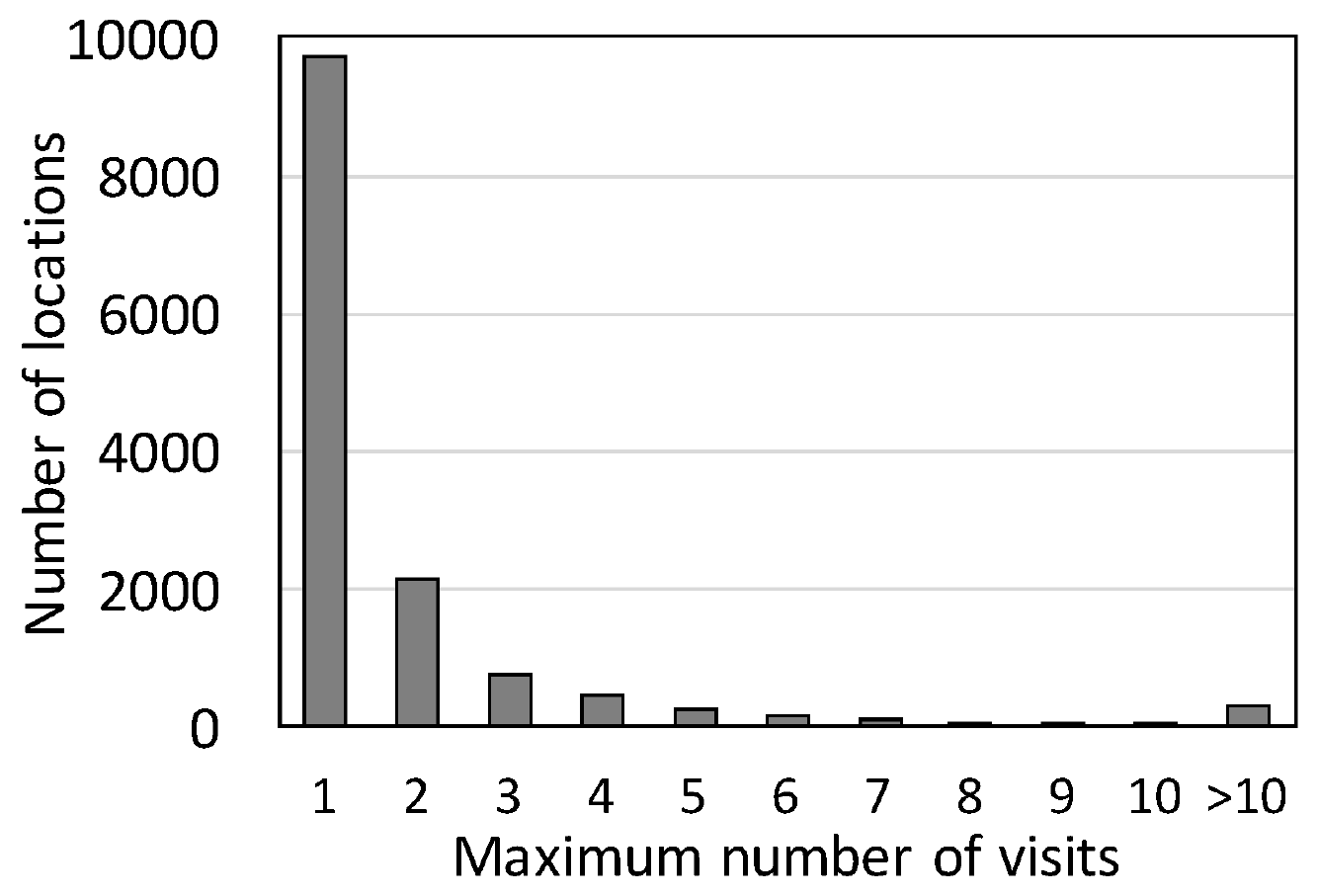}
		\subcaption{The largest number of visits a user contributes to a  location}
		\label{fig:large_visits}
	\end{minipage}
	\caption{Gowalla, New York.}
	\label{fig:gowalla_ny_stats}
	\vspace{-10pt}
\end{figure}

Since the introduced noise linearly increases with $M$ and monotonically increases with $C$, the noise can be reduced by capping them.
First, to truncate $M$, we keep the \emph{first} $M$ location visits of the users who visit more than $M$ locations and throw away the rest of the locations' visits. As a result, adding or removing a single user in the dataset affects at most $M$ locations.
Second, we set the number of visits of the users who have contributed more than $C$ visits to a particular location of $C$.
Figure~\ref{fig:sensitivity_m5} shows that the noise magnitude used in {\LM} drops by two orders of magnitude when compared with the baseline's sensitivity.

At a high-level, {\LM} (Algorithm~\ref{alg:limit}) works as follows.
Line~\ref{line:lim1} limits user activity across locations, while Line~\ref{line:lim2} limits user activity to a location. The impact of Line~\ref{line:lim1} is the introduction of approximation error on the published data. This is because the number of users visiting some locations may be reduced, which alters their actual LE values. Subsequently, some locations may be thrown away without being published.
Furthermore, Line~\ref{line:lim2} also alters the value of location entropy, but by trimming the number of visits of a user to a location.
The actual LE value of location $l$ (after thresholding $M$ and $C$) is computed in Line~\ref{line:le_lim}. Consequently, the noisy LE is published in Line~\ref{line:lap_lim}, where $Lap(\frac{M\Delta H}{\epsilon})$ denotes a random variable drawn independently from Laplace distribution with mean zero and scale parameter $\frac{M\Delta H}{\epsilon}$.

\begin{algorithm} [ht]
\caption{\sc {\LM} Algorithm}
\small
\begin{algorithmic}[1]
\STATE Input: privacy budget $\epsilon$, a set of locations $L=\{l_1, l_2,...,l_{|L|}\}$, maximum threshold on the number of visits of a user to a location $C$, maximum threshold on the number of locations a user visits $M$
\STATE For each user $u$ in $U$ \label{line:for_lim}
\INDSTATE Truncate $M$: keep the first $M$ locations' visits of the users who visit more than $M$ locations \label{line:lim1}
\STATE Compute sensitivity $\Delta H$ from Theorem~\ref{theorem:globalBoundOfDeltaH}.
\STATE For each location $l$ in $L$
\INDSTATE Count \#visits each user made to $l$: $c_{l,u}$ and compute $p_{l,u}$
\INDSTATE Threshold $C$: $\bar{c}_{l,u}=\min(C,c_{l,u})$, then compute $\bar{p}_{l,u}$ \label{line:lim2}
\INDSTATE Compute $\bar{H}(l)= - \sum_{u \in U_l} \bar{p}_{l,u} \log \bar{p}_{l,u}$ \label{line:le_lim}
\INDSTATE Publish noisy LE: $\hat{H}(l)=\bar{H}(l) + Lap(\frac{M\Delta H}{\epsilon})$ \label{line:lap_lim}
\end{algorithmic}
\label{alg:limit}
\end{algorithm}

The performance of Algorithm~\ref{alg:limit} depends on how we set $C$ and $M$. There is a trade-off on the choice of values for $C$ and $M$. Small values of $C$ and $M$ introduce small perturbation error but large approximation error and vice versa. Hence, in Section~\ref{sec:experiment}, we empirically find the values of $M$ and $C$  that strike a balance between noise and approximation errors.

\subsubsection{Privacy Guarantee of the Limit Algorithm}
\label{sec:proof_baseline}
The following theorem shows that Algorithm~\ref{alg:limit} is differentially private.
\begin{theorem} %
\label{theorem:privacyGuarantee}
\hyperref[alg:limit]{Algorithm~\ref{alg:limit}} satisfies $\epsilon$-differential privacy.
\end{theorem}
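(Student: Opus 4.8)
The plan is to reduce the claim to the standard Laplace-mechanism guarantee. Since Line~\ref{line:lap_lim} releases each true (truncated) entropy $\bar{H}(l)$ perturbed by independent noise $Lap(M\Delta H/\epsilon)$, the result of~\cite{dwork2006calibrating} cited after the $L_1$-sensitivity definition tells us it suffices to show that the $L_1$-sensitivity of the full query vector $Q(D)=(\bar{H}(l_1),\dots,\bar{H}(l_{|L|}))$ — computed after the truncation of Line~\ref{line:lim1} and the thresholding of Line~\ref{line:lim2} — is at most $M\Delta H$. The Laplace mechanism remains valid when the noise scale is at least $\sigma(Q)/\epsilon$, so the whole argument comes down to establishing $\sigma(Q)=\max_{D_1,D_2}\|Q(D_1)-Q(D_2)\|_1 \le M\Delta H$ over sibling datasets $D_1,D_2$ that differ in the presence of a single user $u$.

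First I would observe that the truncation in Line~\ref{line:lim1} is applied \emph{independently to each user} (it retains the first $M$ locations of $u$ using only $u$'s own record), so adding or removing $u$ leaves the truncated records of every other user untouched. Consequently $Q(D_1)$ and $Q(D_2)$ can differ only on the locations to which the truncated $u$ actually contributes, and after truncation $u$ contributes to at most $M$ locations; every other coordinate of the query vector is identical across the two datasets. Next, on each such affected location $l$, the thresholding of Line~\ref{line:lim2} caps $u$'s contribution at $C$ visits, so $D_1$ and $D_2$ differ at $l$ exactly by the addition or removal of one user with at most $C$ visits. This is precisely the setting of Theorem~\ref{theorem:globalBoundOfDeltaH}: the per-location change $|\bar{H}_{D_1}(l)-\bar{H}_{D_2}(l)|$ is bounded by the local sensitivity, which the global bound $\Delta H$ upper-bounds uniformly over the number of users $n$ at $l$.

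Combining the two observations, the $L_1$ distance between $Q(D_1)$ and $Q(D_2)$ is a sum over at most $M$ nonzero coordinates, each at most $\Delta H$, giving $\sigma(Q)\le M\Delta H$. Feeding this sensitivity bound into the Laplace mechanism with scale $M\Delta H/\epsilon$ yields $\epsilon$-indistinguishability, completing the proof. The step I expect to require the most care is the first one: one must verify that truncation is genuinely a per-user, data-independent-across-users preprocessing step, since otherwise adding $u$ could perturb the retained locations of \emph{other} users and the clean decomposition ``at most $M$ affected locations, each changing by at most $\Delta H$'' would break down. Establishing this cleanly — and confirming that the bound of Theorem~\ref{theorem:globalBoundOfDeltaH}, derived for the worst case over $n$, indeed dominates the change at every individual affected location after thresholding — is what makes the sensitivity bound $M\Delta H$, and hence the $\epsilon$-differential privacy of \hyperref[alg:limit]{Algorithm~\ref{alg:limit}}, rigorous.
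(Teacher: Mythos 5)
Your proposal is correct and takes essentially the same approach as the paper: a single user, after the per-user truncation of Line~\ref{line:lim1} and the visit-capping of Line~\ref{line:lim2}, can alter at most $M$ coordinates of the published entropy vector, each by at most the bound $\Delta H$ of Theorem~\ref{theorem:globalBoundOfDeltaH}, so Laplace noise of scale $M\Delta H/\epsilon$ yields $\epsilon$-indistinguishability. The paper's in-text proof only sets up the probability-ratio inequality over the affected locations and defers the details to its technical report, so your write-up is simply a more self-contained rendering of the same argument.
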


\begin{proof}
For all locations, let $L_1$ be any subset of $L$.
Let $T = \{t_1, t_2, \dots, t_{|L_1|}\} \in $\textit{Range}($\mathcal{A}$) denote an arbitrary possible output.
Then we need to prove the following:
\begin{displaymath}
\begin{split}
& \quad \frac{Pr[\mathcal{A}(O_{1\text{(org)}}, \dots, O_{|L_1|\text{(org)}}) = T]}{Pr[\mathcal{A}(O_{1\text{(org)}} \setminus O_{l, u\text{(org)}}, \dots, O_{|L_1|\text{(org)}} \setminus O_{l, u\text{(org)}}) = T]} \\
& \le exp(\epsilon) \\
\end{split}
\end{displaymath}

The details of the proof and notations used can be found in 
our technical report~\cite{to2016dple}.
\end{proof}

\section{Relaxation of Private LE}
\label{sec:relaxation}

This section presents our utility enhancements by adopting two weaker notions of privacy: smooth sensitivity~\cite{Nissim:2007:SmoothSensitivity} (slightly weaker) and crowd-blending~\cite{gehrke2012crowd} (strictly weaker).

\subsection{Relaxation with Smooth Sensitivity}
\label{sec:smoothsensitivity}

We aim to extend {\LM} to publish location entropy with smooth sensitivity (or SS for short). 
We first present the notions of smooth sensitivity and the {\LMS} algorithm. We then show how to precompute the SS of location entropy.

\subsubsection{{\LMS} Algorithm}

Smooth sensitivity is a technique that allows one to compute noise magnitude---not only by the function one wants to release (i.e., location entropy), but also by the database itself.
The idea is to use the local sensitivity bound of each location rather than the global sensitivity bound, resulting in small injected noise.
However, simply adopting the local sensitivity to calibrate noise may leak the information about the number of users visiting that location.
Smooth sensitivity is stated as follows.

Let $x, y \in D^N$ denote two databases, where $N$ is the number of users.
Let $l^x, l^y$ denote the location $l$ in database $x$ and $y$, respectively.
Let $d(l^x, l^y)$ be the Hamming distance between $l^x$ and $l^y$,
which is the number of users at location $l$ on which $x$ and $y$ differ; i.e., $d(l^x, l^y)=|\{i : l^x_i \neq l^y_i \}|$; $l^x_i$ represents information contributed by one individual.
The local sensitivity of location $l^x$, denoted by $LS(l^x)$, is the maximum change of location entropy when a user is added or removed.
\begin{definition}{Smooth sensitivity~\cite{Nissim:2007:SmoothSensitivity}}
For $\beta > 0$, $\beta$-smooth sensitivity of location entropy is:
\begin{displaymath}
\begin{split}
SS_{\beta}(l^x)	&= \max_{l^y \in D^N} \Big( LS(l^y) \cdot e^{-\beta d(l^x, l^y)} \Big) \\
						&= \max_{k=0,1,\dots,N} e^{-k\beta} \Big( \max_{y: d(l^x,l^y) = k} LS(l^y) \Big)\\
\end{split}
\end{displaymath}
\end{definition}
Smooth sensitivity of LE of location $l^x$ can be interpreted as the maximum of $LS(l^x)$ and $LS(l^y)$ where the effect of $y$ at distance $k$ from $x$ is dropped by a factor of $e^{-k\beta}$.
Thereafter, the smooth sensitivity of LE can be plugged into Line~\ref{line:lap} of Algorithm~\ref{alg:limit}, producing the {\LMS} algorithm.
\begin{algorithm} [ht]
\caption{\sc {\LMS} Algorithm}
\small
\begin{algorithmic}[1]
\STATE Input: privacy budget $\epsilon$, privacy parameter $\delta$, $L=\{l_1, l_2,...,l_{|L|}\}$, $C, M$
\STATE Copy Lines~\ref{line:for_lim}-\ref{line:le_lim} from Algorithm~\ref{alg:limit}
\INDSTATE Publish noisy LE $\hat{H}(l)=\bar{H}(l) + \frac{M \cdot 2 \cdot SS_{\beta}(l)}{\epsilon} \cdot \eta$,
	where $\eta \sim \text{Lap(1)}$, where $\beta = \frac{\epsilon}{2\ln(\frac{2}{\delta})}$ \label{line:lap}
\end{algorithmic}
\label{alg:limit-ss}
\end{algorithm}

\subsubsection{Privacy Guarantee of {\LMS}}
The noise of {\LMS} is specific to a particular location as opposed to those of the \hyperref[alg:baseline]{{\BL}} and \hyperref[alg:limit]{{\LM}} algorithms.
{\LMS} has a slightly weaker privacy guarantee. It satisfies $(\epsilon, \delta)$-differential privacy, where $\delta$ is a privacy parameter, $\delta = 0$ in the case of Definition~\ref{differential_privacy}. The choice of $\delta$ is generally left to the data releaser. Typically, $\delta < \frac{1}{\text{number of users}}$ (see~\cite{Nissim:2007:SmoothSensitivity}
for details).

\begin{theorem}{Calibrating noise to smooth sensitivity~\cite{Nissim:2007:SmoothSensitivity}}
\label{cor:addNoiseToSmoothSensitivity}
If $\beta \le \frac{\epsilon}{2\ln(\frac{2}{\delta})}$ and $\delta \in (0, 1)$, the algorithm $l \mapsto H(l) + \frac{2 \cdot SS_{\beta}(l)}{\epsilon} \cdot \eta$,
	where $\eta \sim \text{Lap(1)}$, is $(\epsilon, \delta)$-differentially private.
\end{theorem}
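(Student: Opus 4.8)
The statement is the single-location building block: it contains no factor $M$ (that factor enters Algorithm~\ref{alg:limit-ss} only through composition across the at most $M$ locations a user touches). So the plan is to prove $(\epsilon,\delta)$-differential privacy for the one-dimensional release $\mathcal{A}(l)=H(l)+\tfrac{2\,SS_\beta(l)}{\epsilon}\,\eta$ with $\eta\sim\mathrm{Lap}(1)$, and the engine driving everything is that $SS_\beta$ is a \emph{$\beta$-smooth upper bound} on local sensitivity. Reading it off the definition given above, I would first record as a lemma that (i) $SS_\beta(l^x)\ge LS(l^x)$ for every database and (ii) $e^{-\beta}\le SS_\beta(l^y)/SS_\beta(l^x)\le e^{\beta}$ for neighboring $l^x\sim l^y$ (Hamming distance one). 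Property (i) holds because the $k=0$ term of the maximum is $LS(l^x)$; property (ii) holds because shifting the reference database by one user changes every distance $d(l^x,l^{\cdot})$ by at most one, hence rescales each term of the defining maximum by a factor in $[e^{-\beta},e^{\beta}]$.

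Fix neighbors $l^x\sim l^y$ and abbreviate $a=H(l^x)$, $b=H(l^y)$, $s=SS_\beta(l^x)$, $t=SS_\beta(l^y)$. The output of $\mathcal{A}$ on $l^x$ has density $f(z)=\tfrac{\epsilon}{4s}\exp\!\big(-\tfrac{\epsilon|z-a|}{2s}\big)$, and $g$ is the analogous density for $l^y$ with $a,s$ replaced by $b,t$. Because $x$ and $y$ differ in one user, $|a-b|\le LS(l^x)\le s$ by the lemma, and $t/s\in[e^{-\beta},e^{\beta}]$. The plan is to bound the privacy-loss variable $L(z)=\ln\frac{f(z)}{g(z)}=\ln\frac{t}{s}+\frac{\epsilon|z-b|}{2t}-\frac{\epsilon|z-a|}{2s}$ and then invoke the standard reduction that $\Pr_f[T]\le e^{\epsilon}\Pr_g[T]+\delta$ for every measurable $T$ whenever $L(z)\le\epsilon$ off an event of $f$-probability at most $\delta$.

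I would decompose $L$ through an intermediate density centered at $a$ with scale $t$, separating a \emph{dilation} part $\ln\frac{t}{s}+\frac{\epsilon|z-a|}{2}\big(\tfrac1t-\tfrac1s\big)$ and a \emph{shift} part $\frac{\epsilon}{2t}\big(|z-b|-|z-a|\big)$. The shift part is bounded using $|z-b|\le|z-a|+|a-b|\le|z-a|+s$ by $\tfrac{\epsilon s}{2t}\le\tfrac{\epsilon}{2}e^{\beta}$, which is the half of the budget paid for by the factor $2$ in the noise scale $\tfrac{2\,SS_\beta}{\epsilon}$. The dilation part has a constant piece $\ln\tfrac{t}{s}\le\beta$ plus the only genuinely unbounded piece, $\tfrac{\epsilon|z-a|}{2}\big(\tfrac1t-\tfrac1s\big)\le\tfrac{\epsilon|z-a|}{2}\cdot\tfrac{e^{\beta}-1}{s}$, which is positive precisely when the noise scale shrinks ($t<s$). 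On the bulk event $\{|z-a|\le R\}$ with $R=\tfrac{2s}{\epsilon}\ln\tfrac{2}{\delta}$ this piece is at most $(e^{\beta}-1)\ln\tfrac{2}{\delta}$, and the hypotheses $\beta\le\tfrac{\epsilon}{2\ln(2/\delta)}$ and $\delta\in(0,1)$ are exactly what make the shift half and the dilation half together sum to at most $\epsilon$ inside the bulk.

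The crux, and the step I expect to be most delicate, is the complementary tail: since the two output distributions have \emph{different} scales, $L(z)$ truly diverges as $|z-a|\to\infty$, so no uniform $\epsilon$-bound can exist and the $\delta$ slack is unavoidable rather than an artifact of loose estimation. I would control it by the Laplace tail identity $\Pr_f[\,|z-a|>R\,]=\exp\!\big(-\tfrac{\epsilon R}{2s}\big)=\tfrac{\delta}{2}$ for the $R$ chosen above, so the event on which $L$ may exceed $\epsilon$ carries $f$-mass at most $\delta$; combined with the bulk bound this yields $\Pr_f[T]\le e^{\epsilon}\Pr_g[T]+\delta$, i.e. $(\epsilon,\delta)$-differential privacy. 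The remaining work is purely quantitative: pinning the constants so the dilation contribution $(e^{\beta}-1)\ln\tfrac{2}{\delta}+\beta$ and the shift contribution $\tfrac{\epsilon}{2}e^{\beta}$ do not overshoot $\epsilon$ under the stated bound on $\beta$, which is precisely the bookkeeping that fixes the matching factors $2$ in $\tfrac{2\,SS_\beta}{\epsilon}$ and in $\tfrac{\epsilon}{2\ln(2/\delta)}$. Since this is the verbatim calibration lemma of Nissim, Raskhodnikova, and Smith, I would cite~\cite{Nissim:2007:SmoothSensitivity} for the sharp constants once the structural argument above is in place.
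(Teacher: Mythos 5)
First, context you could not have known: the paper never proves Theorem~\ref{cor:addNoiseToSmoothSensitivity} at all --- it is the Laplace noise-calibration lemma of Nissim, Raskhodnikova and Smith, imported verbatim with a citation in the theorem header. So the only argument to measure yours against is the original one, and your plan is structurally that argument: a smoothness lemma for $SS_\beta$, a shift/dilation split of the privacy loss, and a bulk/tail analysis of the Laplace density. Your lemma (parts (i) and (ii)), the decomposition identity, the tail computation $\Pr_f[\,|z-a|>R\,]=\delta/2$, and the reduction from ``loss $\le\epsilon$ off a $\delta$-mass event'' to $(\epsilon,\delta)$-DP are all correct.

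The genuine gap is in the step you wave off as ``purely quantitative'': with the bounds you actually derived, the argument cannot close, for any constants. At the cap $\beta=\frac{\epsilon}{2\ln(2/\delta)}$ --- exactly the value Algorithm~\ref{alg:limit-ss} uses, so it cannot be dodged --- your dilation bound satisfies $\beta+(e^{\beta}-1)\ln\frac{2}{\delta}\ge\beta+\beta\ln\frac{2}{\delta}=\beta+\frac{\epsilon}{2}$, and your shift bound $\frac{\epsilon}{2}e^{\beta}$ strictly exceeds $\frac{\epsilon}{2}$; their sum strictly exceeds $\epsilon+\beta$, so the bulk inequality $L(z)\le\epsilon$ is never established and the hypotheses are \emph{not} ``exactly what make the two halves sum to at most $\epsilon$.'' Two repairs are needed, both available inside your own setup. (i) Bound the shift part by $\frac{\epsilon}{2t}\lvert a-b\rvert$ and control $\lvert a-b\rvert$ by the local sensitivity \emph{at $y$}, not at $x$: since $x\sim y$, $\lvert a-b\rvert\le LS(l^y)\le SS_\beta(l^y)=t$, giving exactly $\frac{\epsilon}{2}$ with no $e^{\beta}$ loss. (ii) In the dilation part, do not bound the signed term $\ln\frac{t}{s}$ by $+\beta$: in the only dangerous case $t<s$, write $s=e^{\mu}t$ with $0<\mu\le\beta$ and keep $\ln\frac{t}{s}=-\mu$, so the bulk dilation contribution is $-\mu+(e^{\mu}-1)\ln\frac{2}{\delta}$; that negative $-\mu$ is precisely what cancels the excess of $e^{\mu}-1$ over $\mu$ and pulls the term back toward $\frac{\epsilon}{2}$. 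Even after both repairs, the inequalities (in this case and in the complementary case $t\ge s$) close only under further joint restrictions on $(\epsilon,\delta)$ --- the original Nissim et al.\ lemma is stated for $\epsilon,\delta\in(0,1)$, a condition the paper's restatement silently drops and your sketch never supplies. So the constant-chasing is genuinely delicate, not routine; if you do not want to carry it out under the NRS hypotheses, the honest move is the paper's own, namely quoting the lemma.
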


\begin{theorem} 
\label{theorem:privacyLimitSS}
{\LMS} is $(\epsilon, \delta)$-differentially private.
\end{theorem}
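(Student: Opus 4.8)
The plan is to build on the single-location calibration result (Theorem~\ref{cor:addNoiseToSmoothSensitivity}) and lift it to the full multi-location release using the same factorization that underlies the privacy proof of {\LM} (Theorem~\ref{theorem:privacyGuarantee}). First I would invoke the $M$-truncation step (Line~\ref{line:lim1} of Algorithm~\ref{alg:limit}, which is copied into {\LMS}): after truncation, adding or removing a single user alters the visit set of at most $M$ locations. Hence, for any two sibling databases differing in one user, the published coordinates $\hat{H}(l)$ at all \emph{unaffected} locations have identical conditional distributions and cancel in the likelihood ratio of Definition~\ref{differential_privacy}; the entire privacy loss is therefore carried by the at most $M$ affected locations.

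Next I would analyze those affected locations. Because the noise $\eta \sim \mathrm{Lap}(1)$ is drawn independently per location in Line~\ref{line:lap}, the release factors into independent per-location smooth-sensitivity mechanisms, each adding noise of scale $\frac{2\,SS_\beta(l)}{\epsilon/M}$ with $\beta = \frac{\epsilon}{2\ln(2/\delta)}$. The crucial observation is that the extra factor $M$ in the scale $\frac{2 M\, SS_\beta(l)}{\epsilon}$ is precisely what an application of Theorem~\ref{cor:addNoiseToSmoothSensitivity} under the \emph{split} budget $\epsilon' = \epsilon/M$ would demand: each coordinate, viewed in isolation, is intended to inherit a per-location guarantee at budget $\epsilon/M$. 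Composing the at most $M$ affected coordinates should then return the overall $(\epsilon,\delta)$ bound, mirroring the way the $M\Delta H$ sensitivity was accumulated across locations in Theorem~\ref{theorem:privacyGuarantee}.

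The hard part will be making this composition over the $M$ affected coordinates rigorous under smooth-sensitivity noise, rather than the per-location mechanism itself, which is immediate from Theorem~\ref{cor:addNoiseToSmoothSensitivity}. Two points need care. First, the smoothing parameter and the budget split interact: Theorem~\ref{cor:addNoiseToSmoothSensitivity} requires $\beta \le \frac{\epsilon'}{2\ln(2/\delta)}$, so with $\epsilon'=\epsilon/M$ the admissible $\beta$ shrinks by a factor of $M$, and I must verify that the single value $\beta = \frac{\epsilon}{2\ln(2/\delta)}$ used in Line~\ref{line:lap} is consistent (or tighten it, noting that larger noise only strengthens each coordinate's guarantee). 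Second, the additive slack $\delta$ must be tracked across coordinates: a naive basic composition of $M$ copies of an $(\epsilon/M,\delta)$ mechanism yields $(\epsilon, M\delta)$, so I would either allocate $\delta/M$ per location at the outset, or apply a union bound over the at most $M$ events on which the smooth-sensitivity bound can fail, confining the total additive term to $\delta$. Finally, I would confirm that the deterministic truncation is sufficiently data-independent that it introduces no additional neighboring-database gap, which completes the $\epsilon$-ratio argument on the affected coordinates and establishes Theorem~\ref{theorem:privacyLimitSS}.
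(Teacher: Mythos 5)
Your proposal takes essentially the same route as the paper: the paper's (one-line) proof likewise decomposes the release per location, observing that the per-location mechanism $\mathcal{A}_l$ is $(0)$-differentially private for locations the differing user does not affect, and $(\frac{\epsilon}{M},\frac{\delta}{M})$-differentially private (via Theorem~\ref{cor:addNoiseToSmoothSensitivity} with the $M$-scaled noise) for each of the at most $M$ affected locations, then implicitly composes to obtain $(\epsilon,\delta)$ --- exactly your ``allocate $\delta/M$ per location at the outset'' option. The $\beta$-consistency caveat you flag is legitimate --- the algorithm's $\beta=\frac{\epsilon}{2\ln(2/\delta)}$ is not rescaled for the split budget $\epsilon/M$ and the paper's proof is silent on this --- but that is a gap the paper shares rather than a divergence in approach.
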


\begin{proof}
Using \hyperref[cor:addNoiseToSmoothSensitivity]{Theorem~\ref{cor:addNoiseToSmoothSensitivity}}, 
$\mathcal{A}_l$ satisfies $(0)$-differential privacy when $l \notin L_1 \cap L(u)$, and satisfies $(\frac{\epsilon}{M}, \frac{\delta}{M})$-differential privacy when $l \in L_1 \cap L(u)$.
\end{proof}

\subsubsection{Precomputation of Smooth Sensitivity}
\label{subsec:computerSmoothSensitivity}

This section shows that the smooth sensitivity of a location visited by $n$ users can be effectively precomputed.
\hyperref[fig:sensitivity_m5]{Figure~\ref{fig:sensitivity_m5}} illustrates the precomputed local sensitivity for a fixed value of $C$. 

Let $LS(C, n), SS(C, n)$ be the local sensitivity and the smooth sensitivity of all locations that visited by $n$ users, respectively. $LS(C, n)$ is defined in \hyperref[theorem:boundOfDeltaH]{Theorem~\ref{theorem:boundOfDeltaH}}.
Let $GS(C)$ be the global sensitivity of the location entropy given $C$, which is defined in \hyperref[theorem:globalBoundOfDeltaH]{Theorem~\ref{theorem:globalBoundOfDeltaH}}.
\hyperref[alg:compute_ss]{Algorithm~\ref{alg:compute_ss}} computes $SS(C, n)$.
At a high level, the algorithm computes the effect of all locations at every possible distance $k$ from $n$, which is non-trivial. Thus, to speed up computations, we propose two stopping conditions based on the following observations.

Let $n^x, n^y$ be the number of users visited $l^x, l^y$, respectively.
If $n^x > n^y$, Algorithm~\ref{alg:compute_ss} stops when $e^{-k\beta} GS(C)$ is less than the current value of smooth sensitivity (\hyperref[line:stopSmallN]{Line~\ref{line:stopSmallN}}).
If $n^x < n^y$, given the fact that $LS(l^y)$ starts to decrease when $n^y > \frac{C}{\log C - 1} + 1$, and $e^{-k\beta}$ also decreases when $k$ increases, 
Algorithm~\ref{alg:compute_ss} also terminates when $n^y > \frac{C}{\log C - 1} + 1$ (\hyperref[line:stopSmallN]{Line~\ref{line:stopBigN}}).
In addition, the algorithm tolerates a small value of smooth sensitivity $\xi$. 
Thus, when $n$ is greater than $n_0$ such that $LS(C, n_0) < \xi$, the precomputation of $SS(C, n)$ is stopped and $SS(C, n)$ is considered as $\xi$ for all $n > n_0$ (\hyperref[line:stopSmallN]{Line~\ref{line:stopBigN}}).

\begin{algorithm} [ht]
\caption{\sc Precompute Smooth Sensitivity}
\small
\begin{algorithmic}[1]
\STATE Input: privacy parameters: $\epsilon, \delta, \xi$; $C$, maximum number of possible users $N$
\STATE Set $\beta = \frac{\epsilon}{2\ln(\frac{2}{\delta})}$
\STATE For $n = [1, \dots, N]$
\INDSTATE $SS(C, n) = 0$
\INDSTATE For $k = [1,\dots,N]$:
\INDSTATE[2] $SS(C, n) = \max \Big(SS(C, n), e^{-k\beta} \max (LS(C, n-k), LS(C, n+k)) \Big)$
\INDSTATE[2] Stop when $e^{-k\beta} GS(C, n-k) < SS(C, n)$ and $n + k > \frac{C}{\log C - 1} + 1$ \label{line:stopSmallN}
\INDSTATE Stop when $n > \frac{C}{\log C - 1} + 1$ and $LS(C, n) < \xi$ \label{line:stopBigN}
\end{algorithmic}
\label{alg:compute_ss}
\end{algorithm}

\subsection{Relaxation with Crowd-Blending Privacy}

\subsubsection{{\LMC} Algorithm}

Thus far, we publish entropy for all locations; however, the ratio of noise to the true value of LE (noise-to-true-entropy ratio) is often excessively high when the number of users visiting a location $n$ is small (i.e., Equation~\ref{eq:entropy_maximum} shows that entropy of a location is bounded by $\log(n)$). The large noise-to-true-entropy ratio would render the published results useless since the introduced noise outweighs the actual value of LE.
This is an inherent issue with the sparsity of the real-world datasets. For example, Figure~\ref{fig:sparsity} summarizes the number of users contributing visits to each location in the Gowalla dataset. The figure shows that most locations have check-ins from fewer than ten users. These locations have LE values of less than $\log(10)$, which are particularly prone to the noise-adding mechanism in differential privacy.

\begin{figure}[ht]
	\begin{minipage}[b]{0.48\linewidth}
		\centering
  \includegraphics[width=1\textwidth]{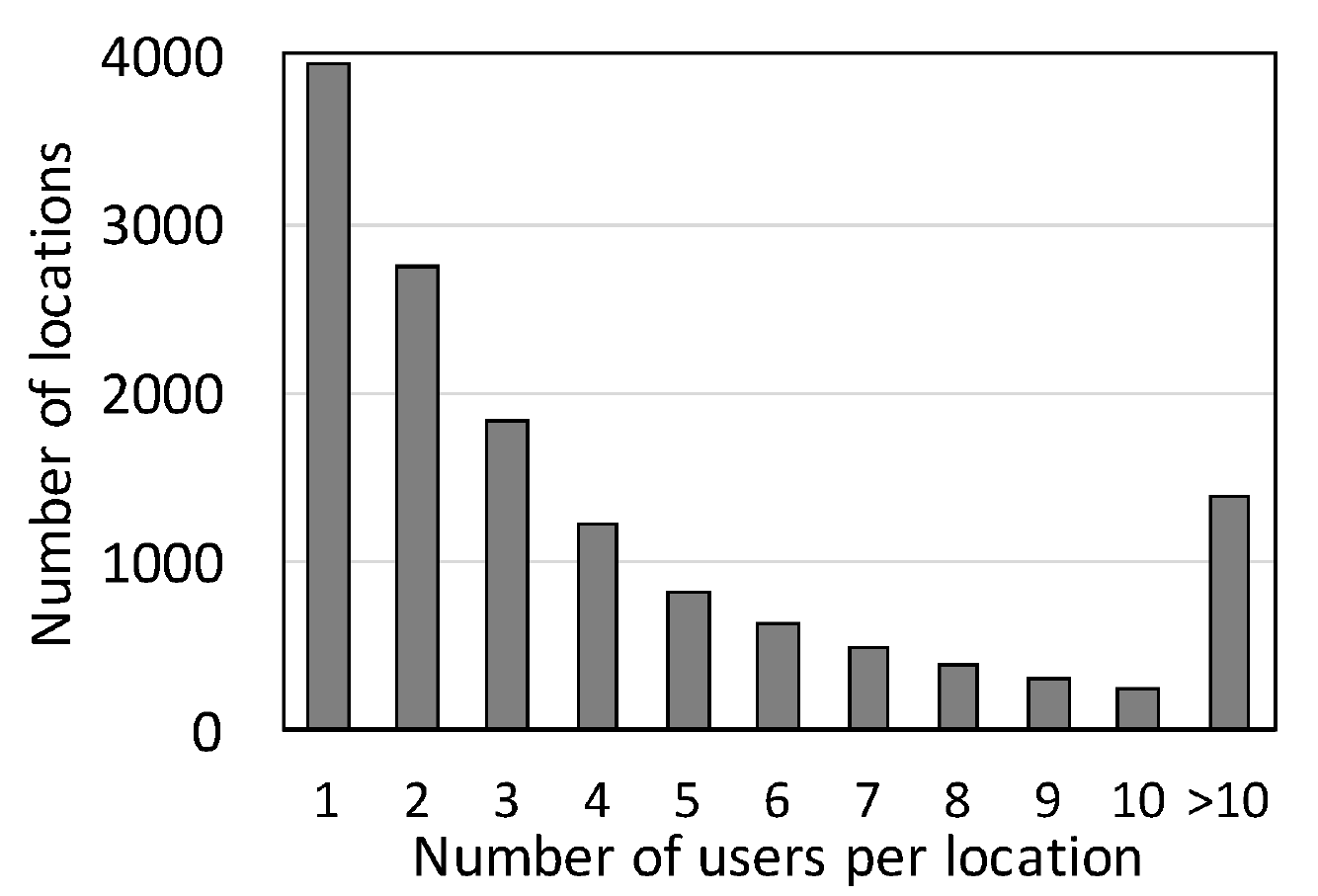}
  \caption{Sparsity of location visits (Gowalla, New York).}
  \label{fig:sparsity}
	\end{minipage}
	\vspace{-5pt}
	\hspace{2pt}
	\begin{minipage}[b]{0.48\linewidth}
		\centering
  \includegraphics[width=1\textwidth]{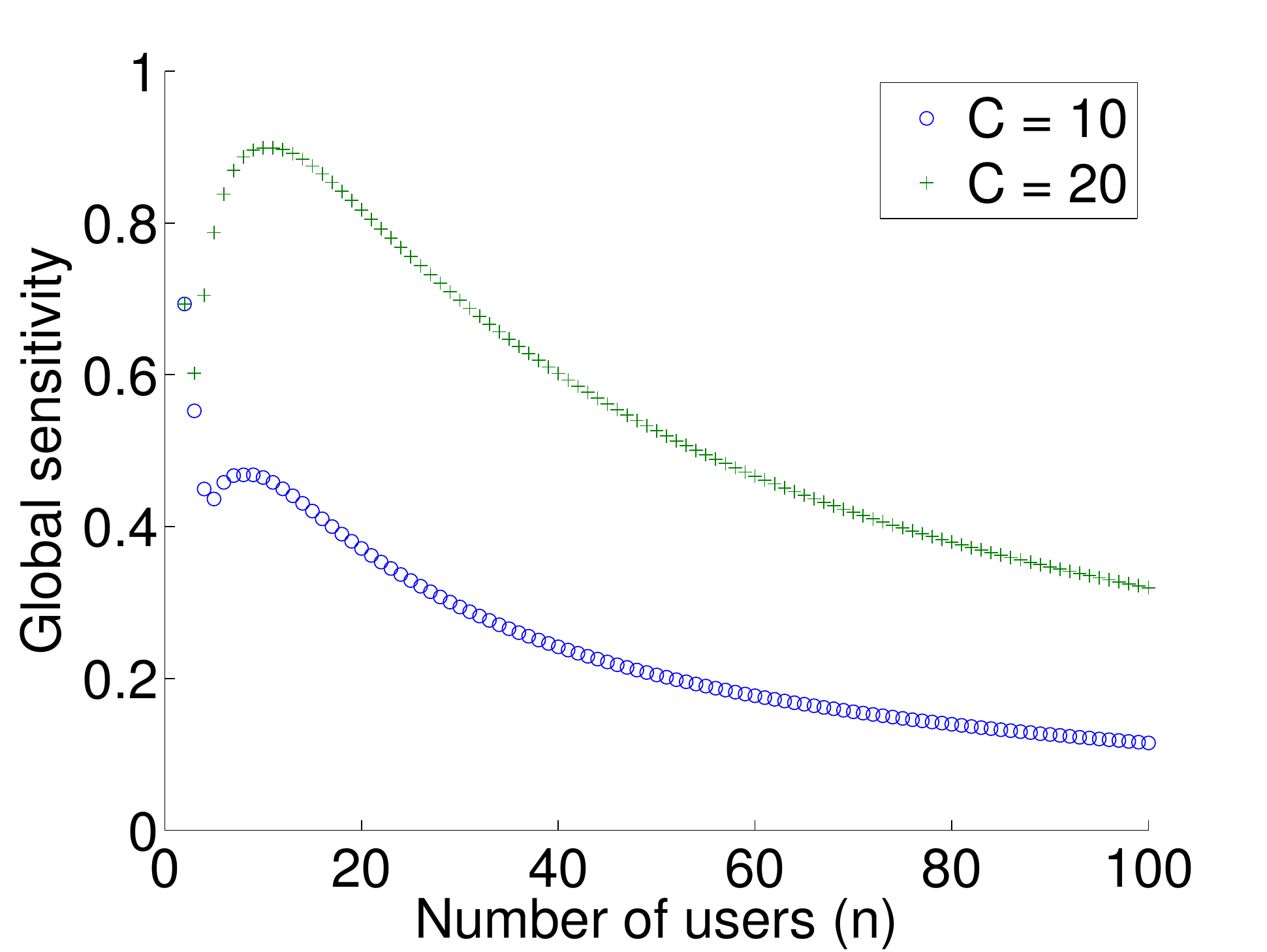}
  \caption{Global sensitivity bound when varying $n$.}
  \label{fig:boundByVaryNFixedC}
	\end{minipage}
	\vspace{-5pt}
\end{figure}

Therefore, to reduce the noise-to-true-entropy ratio, we propose a small sensitivity bound of location entropy that depends on the minimum number of users visiting a location, denoted by $k$. Subsequently, we present Algorithm~\ref{alg:crowd} that satisfies $(k,\epsilon)$-crowd-blending privacy~\cite{gehrke2012crowd}. We prove this in Section~\ref{sec:proof_crowd}.

The algorithm aims to publish entropy of locations with at least $k$ users ($n \ge k$) and throw away the other locations. We refer to the algorithm as {\LMC}.
Lines~\ref{line:start_cb}-\ref{line:end_cb} publish the entropy of each location according to ($k,\epsilon$)-crowd-blending privacy. That is, we publish the entropy of the locations with at least $k$ users and suppress the others. The following lemma shows that for the locations with at least $k$ users we have a tighter bound on $\Delta H$, which depends on $C$ and $k$.
Figure~\ref{fig:sensitivity_m5} shows that the sensitivity used in {\LMC} is significantly smaller than {\LM}'s sensitivity.

\begin{theorem}
\label{theorem:globalBoundOfDeltaH_limit_n}
Global sensitivity of location entropy for locations with at least $k$ users, $k \ge \frac{C}{\log C - 1} + 1$, where $C$ is the maximum number of visits a user contributes to a location, is the local sensitivity at $n = k$.

\end{theorem}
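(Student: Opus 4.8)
The plan is to read the claim as a statement about where a single-variable function attains its maximum. By definition, the global sensitivity restricted to locations with at least $k$ users is $\max_{n \ge k} LS(C,n)$, where $LS(C,n) = \Delta H_l$ is the local sensitivity from Theorem~\ref{theorem:boundOfDeltaH}. Writing $n^{*} = \frac{C}{\log C - 1} + 1$, I would show that $LS(C,n)$ is monotonically non-increasing on $[n^{*}, \infty)$. Since the hypothesis gives $k \ge n^{*}$, the restricted maximum is then attained at the left endpoint $n = k$, which is exactly the local sensitivity at $n = k$, as claimed. This is consistent with Theorem~\ref{theorem:globalBoundOfDeltaH}, whose unconstrained maximum is attained precisely at the peak $n = n^{*}$.

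The heart of the argument is the monotonicity of $LS(C,n)$ past $n^{*}$, which I would establish term by term on the three-way maximum of Theorem~\ref{theorem:boundOfDeltaH} (the case $n>1,\,C>1$). The first two terms share the functional form $f(m) = \log m - \log(m+C) + \frac{C\log C}{m+C}$, evaluated at $m = n-1$ and $m = n$ respectively. Differentiating,
$$f'(m) = \frac{C\,(m + C - m\log C)}{m\,(m+C)^{2}},$$
so $f'(m)$ vanishes at $m = \frac{C}{\log C - 1}$ and is negative beyond it (assuming $C > e$, so that $\log C - 1 > 0$). Hence the first term peaks at $n - 1 = \frac{C}{\log C - 1}$, i.e.\ at $n = n^{*}$, while the second peaks at $n = \frac{C}{\log C - 1} < n^{*}$; both are therefore decreasing for $n \ge n^{*}$. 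For the third term, note that $H(\mathcal{C}\setminus c_u) = \log(n-1) - \frac{\log C}{C-1} + \log\big(\frac{\log C}{C-1}\big) + 1$ is increasing in $n$, so $1/\exp\big(H(\mathcal{C}\setminus c_u)\big)$, and hence $\log\big(1 + 1/\exp(H(\mathcal{C}\setminus c_u))\big)$, is decreasing in $n$ for every $n$.

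Since each of the three terms is non-increasing on $[n^{*}, \infty)$, their pointwise maximum $LS(C,n)$ is non-increasing there as well, which completes the argument; notably, one never needs to determine which of the three terms dominates. The main obstacle is the derivative computation and sign analysis that pins the common extremal point of the first two terms to $\frac{C}{\log C - 1}$, together with checking that the shift between $m = n-1$ and $m = n$ places both peaks at or below $n^{*}$. This is also where the precise form of the threshold $k \ge \frac{C}{\log C - 1} + 1$ enters, and where one must assume $C$ lies in the regime $\log C > 1$ so that the threshold is well defined and positive.
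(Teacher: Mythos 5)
Your proposal is correct and follows essentially the same route as the paper: the paper's proof also argues that the local sensitivity $LS(C,n)$ is decreasing for $n \ge \frac{C}{\log C - 1} + 1$, so the maximum over $n \ge k$ is attained at the left endpoint $n = k$. Your term-by-term derivative analysis simply fills in the monotonicity details that the paper defers to its technical report.
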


\begin{proof}
We prove the theorem by showing that local sensitivity decreases when the number of users $n \ge \frac{C}{\log C - 1} + 1$.
Thus, when $n \ge \frac{C}{\log C - 1} + 1$, the global sensitivity equals to the local sensitivity at the smallest value of $n$, i.e, $n = k$.
The detailed proof can be found in 
our technical report~\cite{to2016dple}.
\end{proof}

\begin{algorithm} [ht]
\caption{\sc {\LMC} Algorithm}
\small
\begin{algorithmic}[1]
\STATE Input: all users $U$, privacy budget $\epsilon$; $C, M, k$
\INDSTATE Compute global sensitivity $\Delta H$ based on Theorem~\ref{theorem:globalBoundOfDeltaH_limit_n}.
\STATE For each location $l\in L$	\label{line:start_cb}
\INDSTATE Count number of users who visit $l$, $n_l$
\INDSTATE If $n_l \ge k$, publish $\hat{H}(l)$ according to Algorithm~\ref{alg:limit} with budget $\epsilon$ using a tighter bound on $\Delta H$  \label{line:lap_cb}
\INDSTATE Otherwise, do not publish the data	\label{line:end_cb}
\end{algorithmic}
\label{alg:crowd}
\end{algorithm}

\subsubsection{Privacy Guarantee of {\LMC}}
\label{sec:proof_crowd}

Before proving the privacy guarantee of {\LMC}, we first present the notion of crowd-blending privacy, a strict relaxation of differential privacy~\cite{gehrke2012crowd}. $k$-crowd blending private sanitization of a database requires each individual in the database to blend with $k$ other individuals in the database. This concept is related to k-anonymity~\cite{sweeney2002k} since both are based on the notion of ``blending in a crowd." However, unlike k-anonymity that only restricts the published data, crowd-blending privacy imposes restrictions on the noise-adding mechanism. 
Crowd-blending privacy is defined as follows.
\newtheorem{crowd_blending}[definition]{Definition}\label{crowd_blending}
\begin{crowd_blending}[Crowd-blending privacy]
An algorithm $A$ is $(k, \epsilon)$-crowd-blending private if for every database $D$ and every individual $t \in D$, either $t$ $\epsilon$-blends in a crowd of $k$ people in $D$, or $A(D) \approx_\epsilon A(D \backslash \{t\})$ (or both).
\end{crowd_blending}
A result from~\cite{gehrke2012crowd} shows that differential privacy implies crowd-blending privacy.
\begin{theorem}
\newtheorem{diff_crowd}[definition]{Theorem}\label{diff_crowd}
\textsc{DP $\longrightarrow$ Crowd-blending privacy}~\cite{dwork2006calibrating}
Let $A$ be any $\epsilon$-differentially private algorithm. Then, $A$ is ($k,\epsilon$)-crowd-blending private for every integer $k \ge 1$.
\end{theorem}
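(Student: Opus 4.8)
The plan is to unfold both definitions and show that, under the add/remove notion of neighboring datasets adopted in this paper, the second disjunct of the crowd-blending condition holds unconditionally; $(k,\epsilon)$-crowd-blending privacy then follows for every $k\ge 1$ without ever invoking the blending clause. First I would recall exactly what must be established: for every database $D$ and every individual $t\in D$, the algorithm $A$ must either (i) let $t$ $\epsilon$-blend in a crowd of at least $k$ people in $D$, or (ii) satisfy $A(D)\approx_\epsilon A(D\setminus\{t\})$. It therefore suffices to prove disjunct (ii) for an arbitrary pair $(D,t)$, since establishing (ii) makes the disjunction true regardless of (i).

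Next I would observe that deleting the single individual $t$ from $D$ produces precisely the sibling relationship underlying this paper's Definition of $\epsilon$-indistinguishability: $D$ and $D\setminus\{t\}$ differ in only one record. Invoking the hypothesis that $A$ is $\epsilon$-differentially private then yields $\ln\big(\Pr[A(D)=U]/\Pr[A(D\setminus\{t\})=U]\big)\le\epsilon$ for every transcript $U$ (and symmetrically), which is exactly the relation $A(D)\approx_\epsilon A(D\setminus\{t\})$ used in the crowd-blending definition. Because this holds for every $D$ and every $t\in D$, disjunct (ii) is always satisfied, so the conclusion holds for every integer $k\ge 1$ and the crowd size plays no role at all.

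The only delicate point, and the step I expect to be the main obstacle, is aligning the neighbor relation across the two privacy notions. If DP were instead stated in the bounded/replacement model (neighbors have equal size), disjunct (ii) would not be immediate, since $D$ and $D\setminus\{t\}$ differ in size; one would then argue through blending — under replacement-DP every individual $\epsilon$-blends with every other, so any $t$ blends in a crowd of $k$ whenever $|D|\ge k$, and the residual case $|D|<k$ would require the device of treating removal as replacement by a null record to recover (ii). I would flag this reconciliation as the crux, and note that because both this paper's Definition of $\epsilon$-indistinguishability and its sensitivity analysis use the add/remove convention, the direct route of the previous paragraph applies and no blending argument is needed. As a final sanity check I would confirm that the relation $\approx_\epsilon$ appearing in the crowd-blending definition is the same two-sided multiplicative indistinguishability guaranteed by $\epsilon$-DP, so that the implication is genuinely immediate rather than merely up to constants.
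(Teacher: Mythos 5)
Your proof is correct, and it is essentially the canonical argument. Note that the paper does not prove this theorem at all: it imports it as a known result from the crowd-blending work of Gehrke et al.\ (the inline citation to the calibrating-noise paper is a miscitation), and the proof in that source is exactly your observation --- under the add/remove sibling relation used throughout this paper, the second disjunct $A(D) \approx_\epsilon A(D \setminus \{t\})$ of the crowd-blending definition is precisely the $\epsilon$-DP guarantee applied to the pair $(D, D \setminus \{t\})$, so the disjunction holds for every database $D$, every $t \in D$, and every integer $k \ge 1$, with the blending clause never invoked. Your flagged concern about the bounded/replacement neighbor model is a reasonable sanity check but is moot here, since the paper's Definition~1 of $\epsilon$-indistinguishability is stated for sibling datasets differing in one record, which is exactly the relation between $D$ and $D \setminus \{t\}$.
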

The following theorem shows that Algorithm~\ref{alg:crowd} is ($k,\epsilon$)-crowd-blending private.
\begin{theorem} 
\label{crowd_blend}
\hyperref[alg:crowd]Algorithm~\ref{alg:crowd} is ($k,\epsilon$)-crowd-blending private.
\end{theorem}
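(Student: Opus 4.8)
The plan is to verify Definition~\ref{crowd_blending} directly: for every database $D$ and every individual $t \in D$, I must exhibit either (i) a crowd of at least $k$ individuals in $D$ with whom $t$ $\epsilon$-blends, or (ii) that $A(D) \approx_\epsilon A(D \setminus \{t\})$, where $A$ denotes {\LMC}. The starting point is the observation that {\LMC} differs from an ordinary $\epsilon$-differentially private release in exactly one respect: it suppresses every location with fewer than $k$ users and, on the surviving locations (those with $n \ge k$), it injects Laplace noise whose scale is fixed by the sensitivity bound of Theorem~\ref{theorem:globalBoundOfDeltaH_limit_n}. Because $k \ge \frac{C}{\log C - 1} + 1$, that theorem guarantees the local sensitivity of every published location is dominated by its value at $n = k$; hence, restricted to the set of published locations and with the per-user budget split by $M$ as in Algorithm~\ref{alg:limit}, the release is genuinely $\epsilon$-differentially private with respect to inserting or deleting the record of a single user. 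This is the lever that lets me invoke the spirit of Theorem~\ref{diff_crowd} (DP $\Rightarrow$ crowd-blending) on the published part of the output.

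Next I would split on the role of $t$. First, if deleting $t$ does not change which locations are published---i.e., every location $t$ visits has either fewer than $k$ users (already suppressed, and still suppressed after removing $t$) or strictly more than $k$ users (still above threshold after removal)---then $D$ and $D \setminus \{t\}$ induce the same collection of published locations, and on that fixed collection the $\epsilon$-DP property established above yields $A(D) \approx_\epsilon A(D \setminus \{t\})$, so alternative (ii) holds; in particular, when $t$ touches only sparse locations the two output distributions are literally identical. Second, the only remaining possibility is that $t$ is \emph{pivotal} for some location $l^\ast$ with exactly $n_{l^\ast} = k$ users, where removal would flip $l^\ast$ from published to suppressed and thereby change the very shape of the output; here alternative (ii) can fail and I must instead establish blending. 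I would take the crowd to be the $k$ co-visitors of $l^\ast$ (including $t$) and argue that replacing $t$'s record by that of any crowd member $t_i$ keeps $l^\ast$ at $k$ users---hence published---while perturbing each published entropy by no more than the sensitivity used to calibrate the noise, so that $A(D) \approx_\epsilon A(D_{t \to t_i})$ and alternative (i) holds.

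The hard part will be this pivotal case, and specifically the fact---emphasized throughout the paper---that a single user spans up to $M$ locations. Replacing $t$'s entire record by $t_i$'s does not merely touch $l^\ast$: it can remove $t$ from, or add $t_i$ to, other locations, some of which may themselves sit exactly at the suppression threshold, so naively the substitution could again alter the published set and break the clean $\approx_\epsilon$ bound. I expect to control this by arguing per location and then composing: on each of the $\le M$ locations that survives the substitution, the change in true entropy is bounded by the local sensitivity of Theorem~\ref{theorem:boundOfDeltaH} (which at $n \ge k$ is at most the global bound of Theorem~\ref{theorem:globalBoundOfDeltaH_limit_n}), the Laplace mechanism contributes a factor of at most $e^{\epsilon/M}$ per location, and the $M$-fold composition accumulates to $e^{\epsilon}$; the boundary locations that would otherwise change the output structure are handled by choosing the blending crowd among users who co-visit those threshold locations with $t$, so that the relevant counts are preserved. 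Assembling the two cases then covers every $(D, t)$ pair and establishes that Algorithm~\ref{alg:crowd} is $(k, \epsilon)$-crowd-blending private.
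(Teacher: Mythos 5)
Your individual-centric case split (alternative (ii) for non-pivotal users, alternative (i) for pivotal ones) reads the crowd-blending definition more literally than the paper does, and your first case is sound. The gap is in the pivotal case, and it is fatal to the sketch as written. Your crowd must consist of users whose substitution for $t$ preserves the entire published/suppressed pattern, but the set of users who co-visit \emph{all} of $t$'s threshold locations can be just $\{t\}$ itself: let $t$ visit two locations $l_1, l_2$ (with $M \ge 2$), each having exactly $k$ users, whose visitor sets intersect only in $t$. Then for every other individual $t'$, the substitution $D_{t \to t'}$ drops at least one of $l_1, l_2$ below $k$ users, so one distribution publishes a noisy entropy for that location while the other outputs nothing for it --- the two are perfectly distinguishable, hence $t'$ cannot belong to any valid crowd; and deleting $t$ suppresses both locations, so alternative (ii) fails as well. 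There is also a second direction you leave unaddressed: a common co-visitor $t'$ may itself visit some location with exactly $k-1$ users that $t$ does not visit, and the substitution then lifts that location \emph{above} threshold, again changing the published set. So the pivotal case cannot be closed by intersecting crowds; the underlying reason is that crowd-blending privacy does not compose, so the per-location accounting you attempt (sensitivity bound plus $e^{\epsilon/M}$ per surviving location) cannot be assembled into the whole-record statement your sketch needs.

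The paper's proof avoids this difficulty entirely by working at a different granularity: it argues location by location. For a location with at least $k$ users, the publication step (Line~\ref{line:lap_cb}) is an $\epsilon$-differentially private mechanism on that location's visitors (Algorithm~\ref{alg:limit} run with the tighter bound of Theorem~\ref{theorem:globalBoundOfDeltaH_limit_n}), so Theorem~\ref{diff_crowd} (DP $\Rightarrow$ crowd-blending) makes every visitor $\epsilon$-blend in the crowd formed by that location's $\ge k$ users; a location with fewer than $k$ users is suppressed whether or not any single user is present, giving $A(D) \approx_0 A(D \setminus \{t\})$ there. In other words, the paper instantiates ``blending'' with respect to each location's sub-database, never with respect to a user's full record spanning up to $M$ locations --- exactly the granularity at which your obstruction disappears. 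The whole-record guarantee you aim for is not a consequence of the per-location one, and your own hard case shows it can genuinely fail for {\LMC}; to complete a proof you would need to either adopt the paper's per-location interpretation explicitly or restrict/modify the claim, not merely shrink the crowd.
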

\begin{proof}
First, if there are at least $k$ people in a location, then individual $u$ $\epsilon$-blends with $k$ people in $U$. This is because Line~\ref{line:lap_cb} of the algorithm satisfies $\epsilon$-differential privacy, which infers ($k,\epsilon$)-crowd-blending private (Theorem~\ref{diff_crowd}).
Otherwise, we have $A(D) \approx_0 A(D \backslash \{t\})$ since $A$ suppresses each location with less than $k$ users.
\end{proof}

\section{Performance Evaluation}
\label{sec:experiment}
We conduct several experiments on real-world and synthetic datasets to compare the effectiveness and utility of our proposed algorithms. Below, we first discuss our experimental setup. Next, we present our experimental results.

\subsection{Experimental Setup}\label{s:evaluation}
\label{sec:setup}

\textbf{Datasets}: We conduct experiments on one real-world (Gowalla) and two synthetic datasets (Sparse and Dense).
The statistics of the datasets are shown in Table~\ref{tab:datasets}.
Gowalla contains the check-in history of users in a location-based social network. For our experiments, we use the check-in data in an area covering the city of New York. %

\begin{table}
\begin{center}
\footnotesize
\begin{tabular}{ l | c | c | r }
\hline
 & \textbf{Sparse} & \textbf{Dense} & \textbf{Gow.}  \tn
\hline
\# of locations & 10,000 & 10,000 & 14,058  \tn
\hline
\# of users & 100K & 10M & 5,800 \tn
\hline
Max LE & 9.93 & 14.53  & 6.45 \tn
\hline
Min LE & 1.19 & 6.70  & 0.04 \tn
\hline
Avg. LE & 3.19 & 7.79 & 1.45 \tn
\hline
Variance of LE & 1.01 & 0.98 & 0.6  \tn
\hline
Max \#locations per user & 100 & 100 & 1407  \tn
\hline
Avg. \#locations per user & 19.28 & 19.28  & 13.5 \tn
\hline
Max \#visits to a loc. per user & 20,813 & 24,035 & 162  \tn
\hline
Avg. \#visits to a loc. per user & 2578.0 & 2575.8 & 7.2  \tn
\hline
Avg. \#users per loc. & 192.9 & 19,278 & 5.6  \tn
\hline
\end{tabular}
\caption{Statistics of the datasets.}
\label{tab:datasets}
\end{center}
\end{table}
For synthetic data generation, in order to study the impact of the density of the dataset, we consider two cases: Sparse and Dense. Sparse contains 100,000 users while Dense has 10 million users.
The Gowalla dataset is sparse as well.  We add the Dense synthetic dataset to emulate the case for large industries, such as Google, who have access to large- and fine-granule user location data.
To generate visits, without loss of generality, the location with id $x\in[1,2,\dots,10,000]$ has a probability  $1/x$ of being visited by a user. This means that locations with smaller ids tend to have higher location entropy since more users would visit these locations.
In the same fashion, the user with id $y\in\{1,2,\dots,100,000\}$ (Sparse) is selected with probability $1/y$.
This follows the real-world characteristic of location data where a small number of locations are very popular and then many locations have a small number of visits.

In all of our experiments, we use five values of privacy budget $\epsilon \in \{0.1, 0.5, 1, \textbf{5}, 10\}$. We vary the maximum number of visits a user contributes to a location $C \in \{1,2, \dots, \textbf{5}, \dots,50\}$ and the maximum number of locations a user visits $M \in \{1,2,\textbf{5},10,20,30\}$. We vary threshold $k \in \{10, 20, 30, 40$, $\textbf{50}\}$. 
We also set $\xi = 10^{-3}, \delta = 10^{-8}$, and $\beta \approx \epsilon / {2 * ln(2 / \delta)}$.
Default values are shown in boldface.

\textbf{Metrics:} 
We use KL-divergence as one measure of preserving the original data distribution after adding noise. 
Given two discrete probability distributions $P$ and $Q$, the KL-divergence of $Q$ from $P$ is defined as follows:
\begin{equation}
D_{KL}(P || Q) = \sum_i P(i) \log \frac{P(i)}{Q(i)}
\end{equation}
In this paper the location entropy of location $l$ is the probability that $l$ is chosen when a location is randomly selected from the set of all locations; $P$ and $Q$ are respectively the published and the actual LE of locations after normalization; i.e., normalized values must sum to unity.

We also use mean squared error (MSE) over a set of locations $L$ as the metric of accuracy using Equation~\ref{eq:rmse}.
\begin{align}
\label{eq:rmse}
\mathit{MSE}=\frac{1}{|L|}\sum\limits_{l \in L}{}\big(LE_a(l) - LE_n(l)\big)^2
\end{align}
where $LE_a(l)$ and $LE_n(l)$ are the actual and noisy entropy of the location $l$, respectively.

Since {\LMC} discards more locations as compared to {\LM} and {\LMS}, we consider both cases: 1) KL-divergence and MSE metrics are computed on all locations $L$, where the entropy of the suppressed locations are set to zero (default case); 2) the metrics are computed on the subset of locations that {\LMC} publishes, termed \textit{Throwaway}.

\subsection{Experimental Results}\label{sec:result}
\label{sec:results}

We first evaluate our algorithms on the synthetic datasets.
\subsubsection{Overall Evaluation of the Proposed Algorithms}

We evaluate the performance of {\LM} from Section~\ref{sec:limit} and its variants ({\LMS} and {\LMC}).
We do not include the results for {\BL} since the excessively high amount of injected noise renders the perturbed data useless.

Figure~\ref{fig:raw} illustrates the distributions of noisy vs. actual LE on Dense and Sparse.
The actual distributions of the dense (Figure~\ref{fig:d-actual}) and sparse (Figure~\ref{fig:s-actual}) datasets confirm our method of generating the synthetic datasets; locations with smaller ids have higher entropy, and entropy of locations in Dense are higher than that in Sparse.
We observe that {\LMS} generally performs best in preserving the original data distribution for Dense (Figure~\ref{fig:d-limit-ss}), while {\LMC} performs best for Sparse (Figure~\ref{fig:s-limit-cb}). Note that as we show later, {\LMC} performs better than {\LMS} and {\LM} given a small budget $\epsilon$ (see Section~\ref{sec:vary_eps}).

\begin{figure*}[ht]
	\begin{minipage}[b]{.245\linewidth}
		\centering
		\includegraphics[width=1\textwidth]{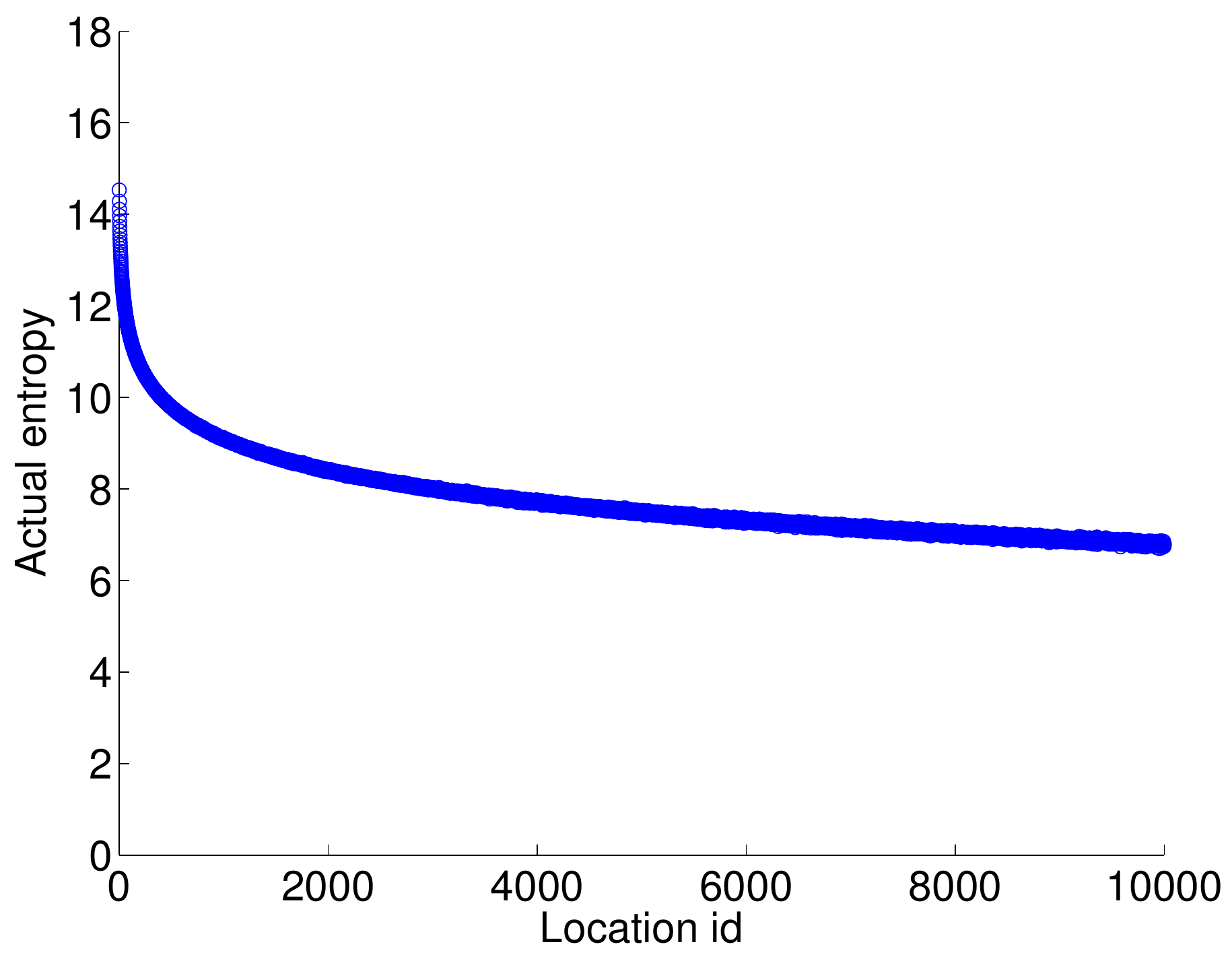}
		\subcaption{Actual (Dense)}
		\label{fig:d-actual}
	\end{minipage}
	\begin{minipage}[b]{.245\linewidth}
		\centering
		\includegraphics[width=1\textwidth]{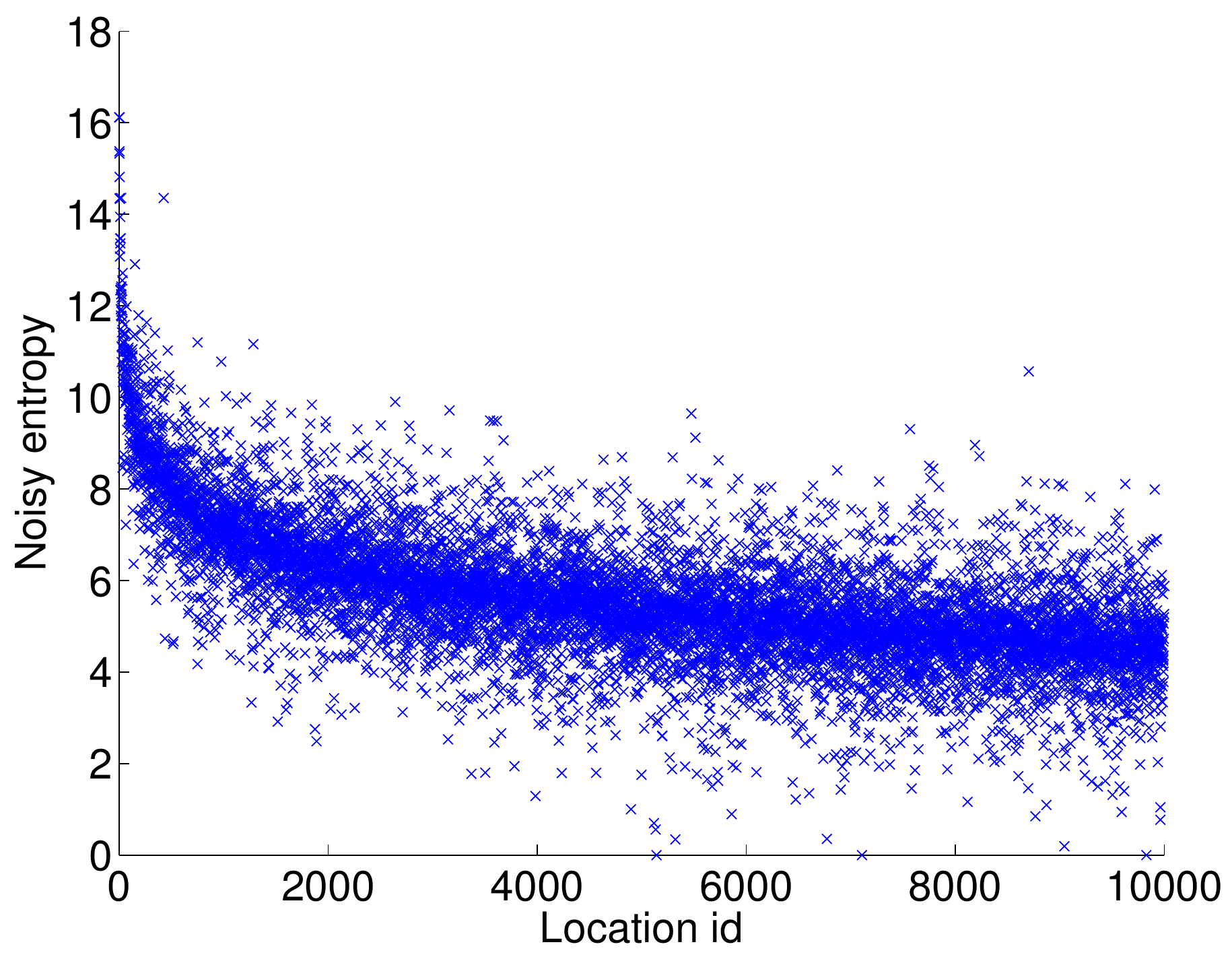}
		\subcaption{{\LM} (Dense)}
		\label{fig:d-limit}
	\end{minipage}
	\begin{minipage}[b]{.245\linewidth}
		\centering
		\includegraphics[width=1\textwidth]{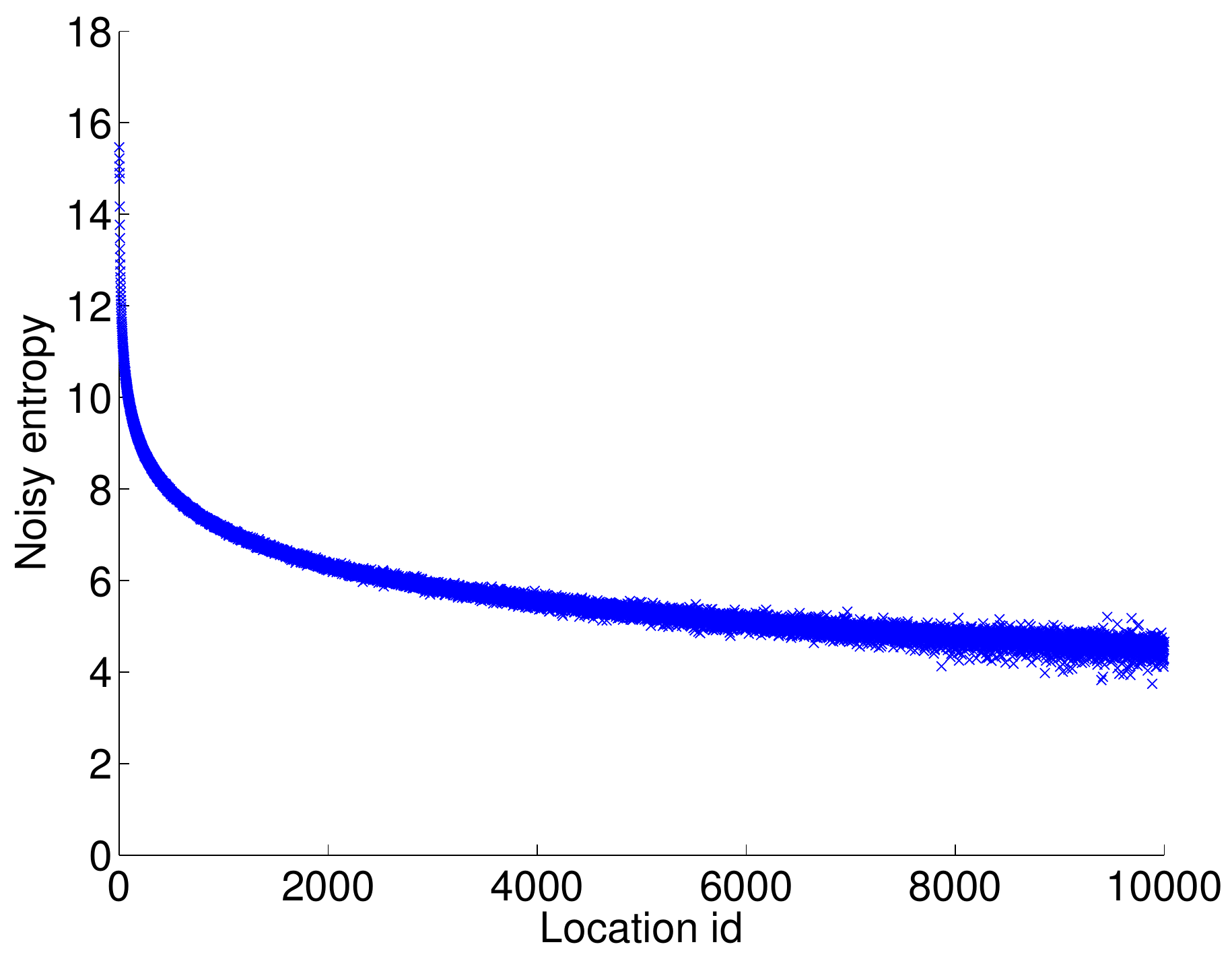}
		\subcaption{{\LMS} (Dense)}
		\label{fig:d-limit-ss}
	\end{minipage}
	\begin{minipage}[b]{.245\linewidth}
		\centering
		\includegraphics[width=1\textwidth]{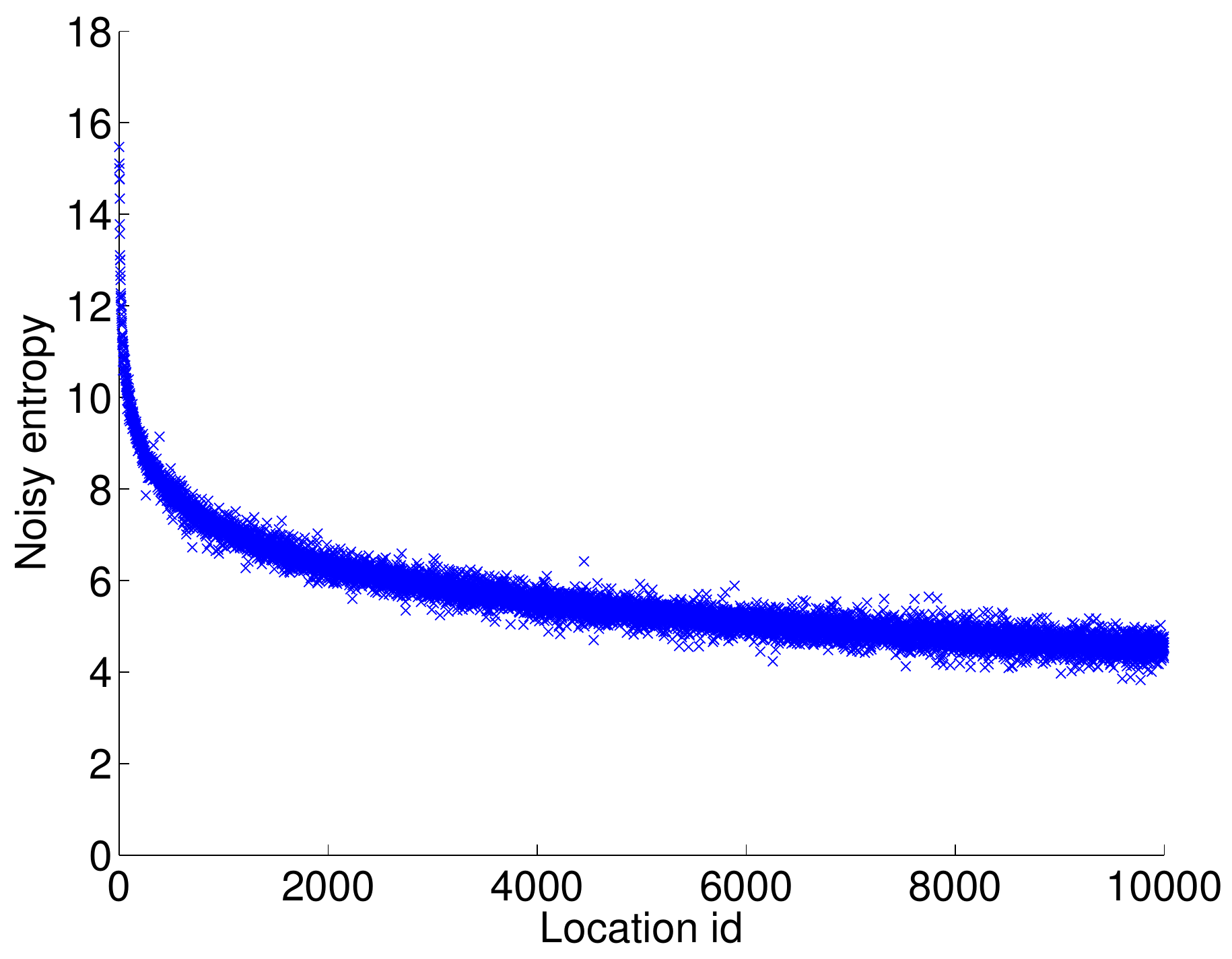}
		\subcaption{{\LMC} (Dense)}
		\label{fig:d-limit-cb}
	\end{minipage}

	\begin{minipage}[b]{.245\linewidth}
		\centering
		\includegraphics[width=1\textwidth]{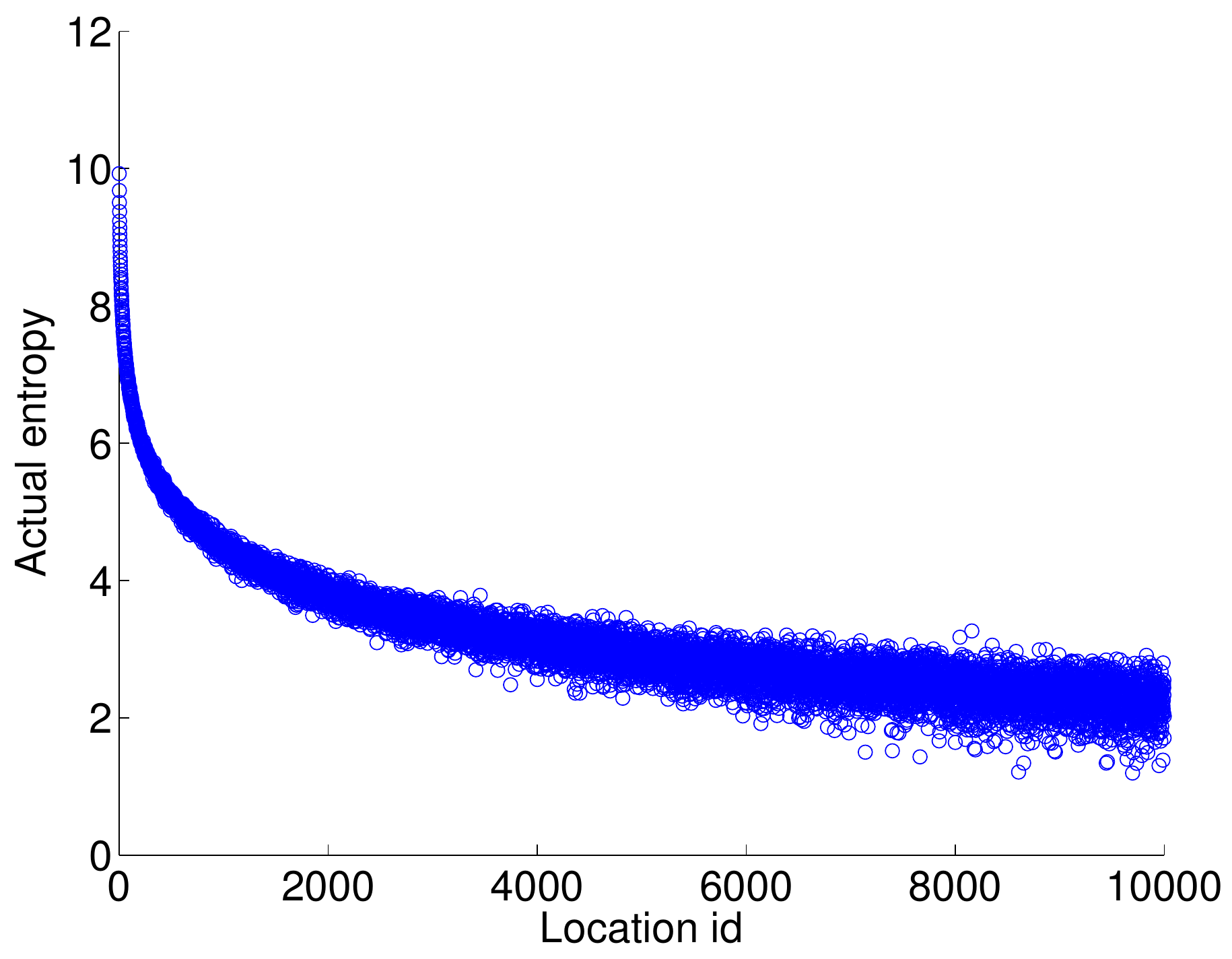}
		\subcaption{Actual (Sparse)}
		\label{fig:s-actual}
	\end{minipage}
	\begin{minipage}[b]{.245\linewidth}
		\centering
		\includegraphics[width=1\textwidth]{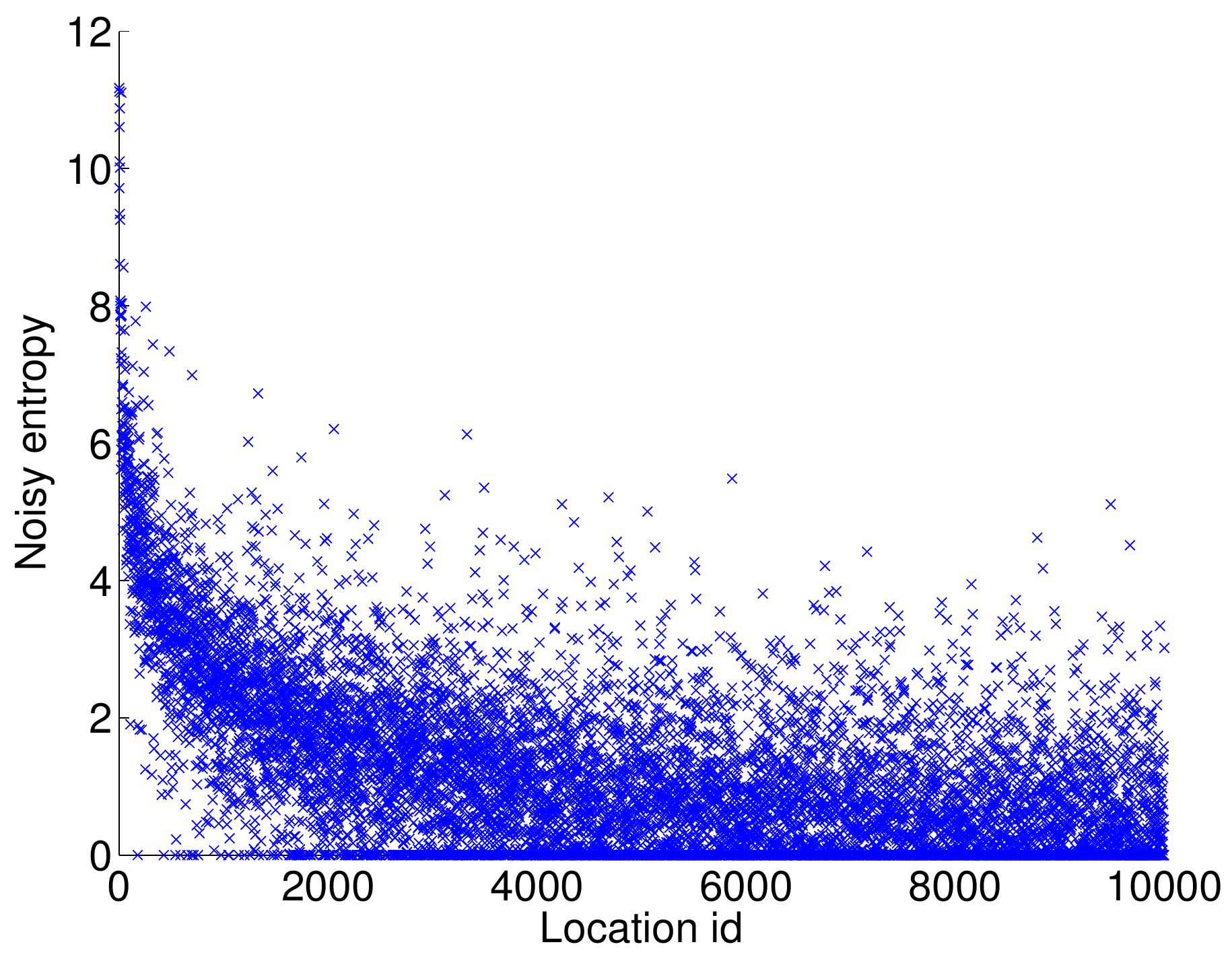}
		\subcaption{{\LM} (Sparse)}
		\label{fig:s-limit}
	\end{minipage}
	\begin{minipage}[b]{.245\linewidth}
		\centering
		\includegraphics[width=1\textwidth]{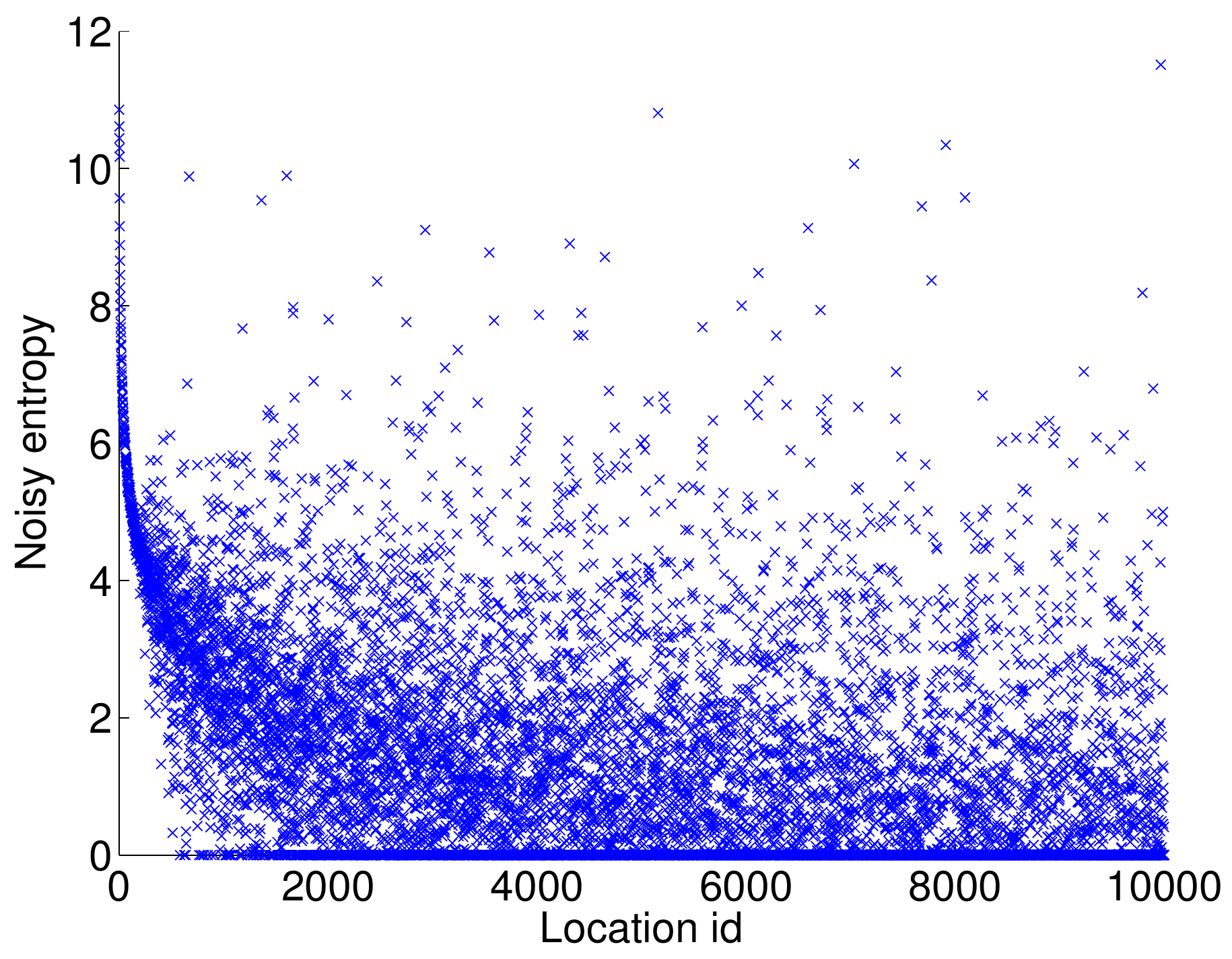}
		\subcaption{{\LMS} (Sparse)}
		\label{fig:s-limit-ss}
	\end{minipage}
	\begin{minipage}[b]{.245\linewidth}
		\centering
		\includegraphics[width=1\textwidth]{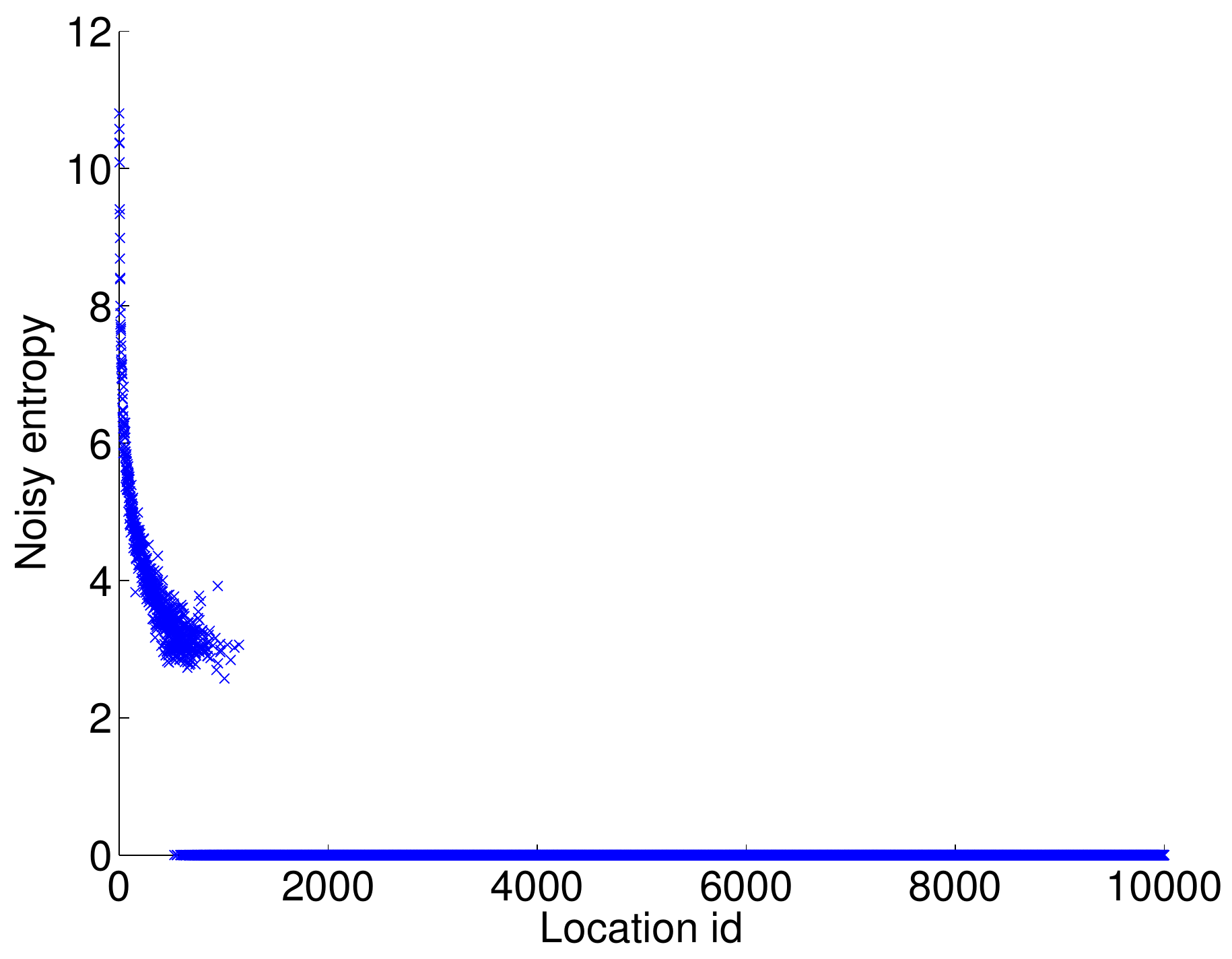}
		\subcaption{{\LMC} (Sparse)}
		\label{fig:s-limit-cb}
	\end{minipage}
	\caption{Comparison of the distributions of noisy vs. actual location entropy on the dense and sparse datasets.}
	\label{fig:raw}
\end{figure*}

Due to the truncation technique, some locations may be discarded. Thus, we report the percentage of perturbed locations, named {\em published ratio}. The published ratio is computed as the number of perturbed locations divided by the total number of \emph{eligible} locations. A location is eligible for publication if it contains check-ins from at least $K$ users ($K\ge 1$).
Figure~\ref{fig:published_ratio} shows the effect of $k$ on the published ratio of {\LMC}. Note that the published ratio of {\LM} and {\LMS} is the same as {\LMC} when $k=K$. The figure shows that the ratio is 100\% with Dense, while that of Sparse is less than 10\%. The reason is that with Dense, each location is visited by a large number of users on average (see Table~\ref{tab:datasets}); thus, limiting $M$ and $C$ would reduce the average number of users visiting a location but not to the point where the locations are suppressed. This result suggests that our truncation technique performs well on large datasets.

\begin{figure}[!htb]\centering
  \includegraphics[width=0.32\textwidth]{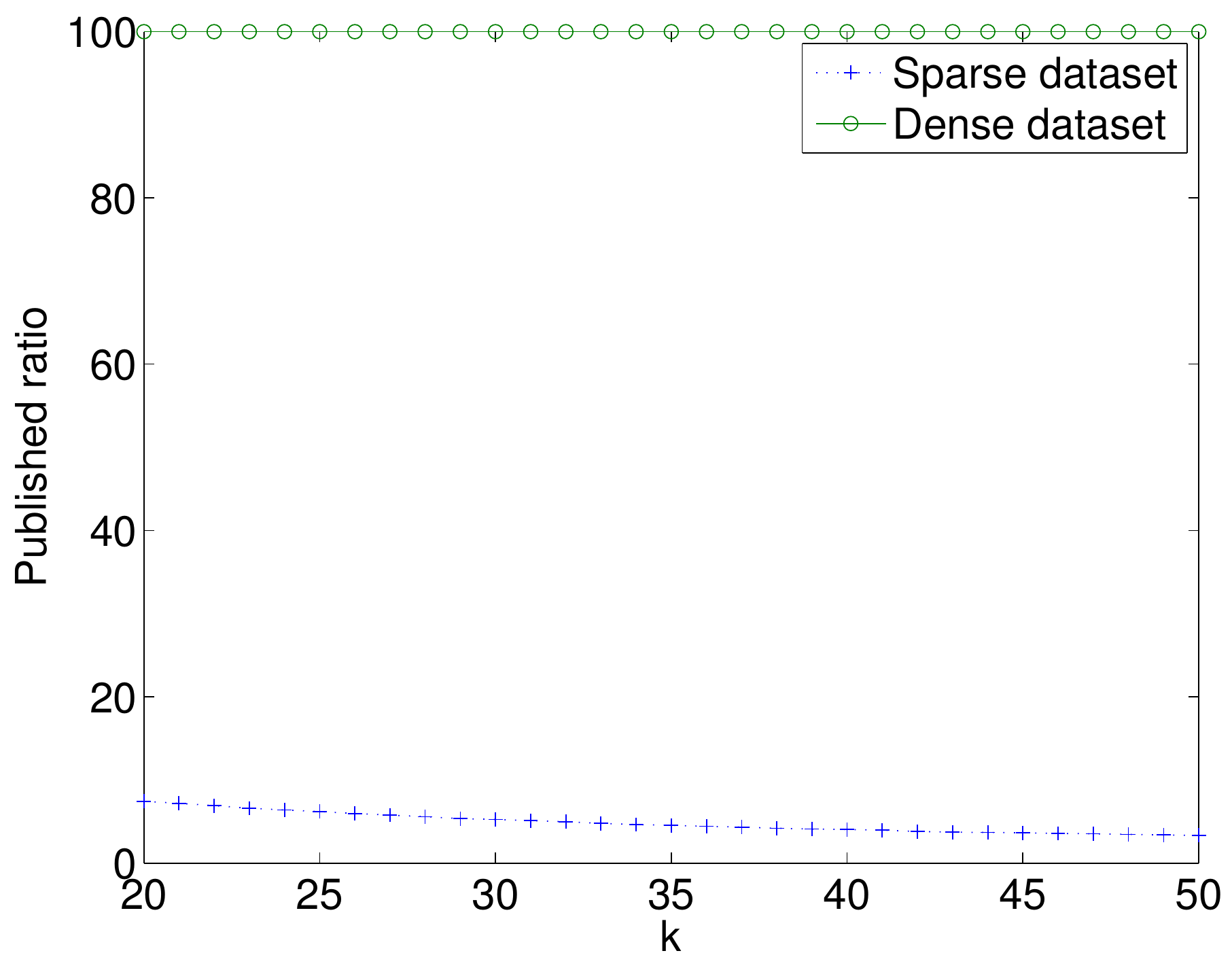}
  \caption{Published ratio of {\LMC} when varying $k$ ($K=20$).}
  \label{fig:published_ratio}
\end{figure}

\subsubsection{Privacy-Utility Trade-off (Varying $\epsilon$)}
\label{sec:vary_eps}

We compare the trade-off between privacy and utility by varying the privacy budget $\epsilon$. The utility is captured by the KL-divergence metric introduced in Section~\ref{sec:setup}. We also use the MSE metric.
Figure~\ref{fig:vary-e} illustrates the results.
As expected, when $\epsilon$ increases, less noise is injected, and values of KL-divergence and MSE decrease. Interestingly though, KL-divergence and MSE saturate at $\epsilon=5$, where reducing privacy level (increase $\epsilon$) only marginally increases utility. This can be explained through a significant approximation error in our thresholding technique that outweighs the impact of having smaller perturbation error.
Note that the approximation errors are constant in this set of experiments since the parameters $C$, $M$ and $k$ are fixed.

Another observation is that the observed errors incurred are generally higher for Dense (Figures~\ref{fig:d-mse-vary-e} vs.~\ref{fig:s-mse-vary-e}), which is surprising, as differentially private algorithms often perform better on dense datasets. The reason for this is because  limiting $M$ and $C$ has a larger impact on Dense, resulting in a large perturbation error.
Furthermore, we observe that the improvements of {\LMS} and {\LMC} over {\LM} are more significant with small $\epsilon$. In other words, {\LMS} and {\LMC} would have more impact with a higher level of privacy protection. Note that these enhancements come at the cost of weaker privacy protection.

\begin{figure}[ht]
	\begin{minipage}[b]{.49\linewidth}
		\centering
		\includegraphics[width=1\textwidth]{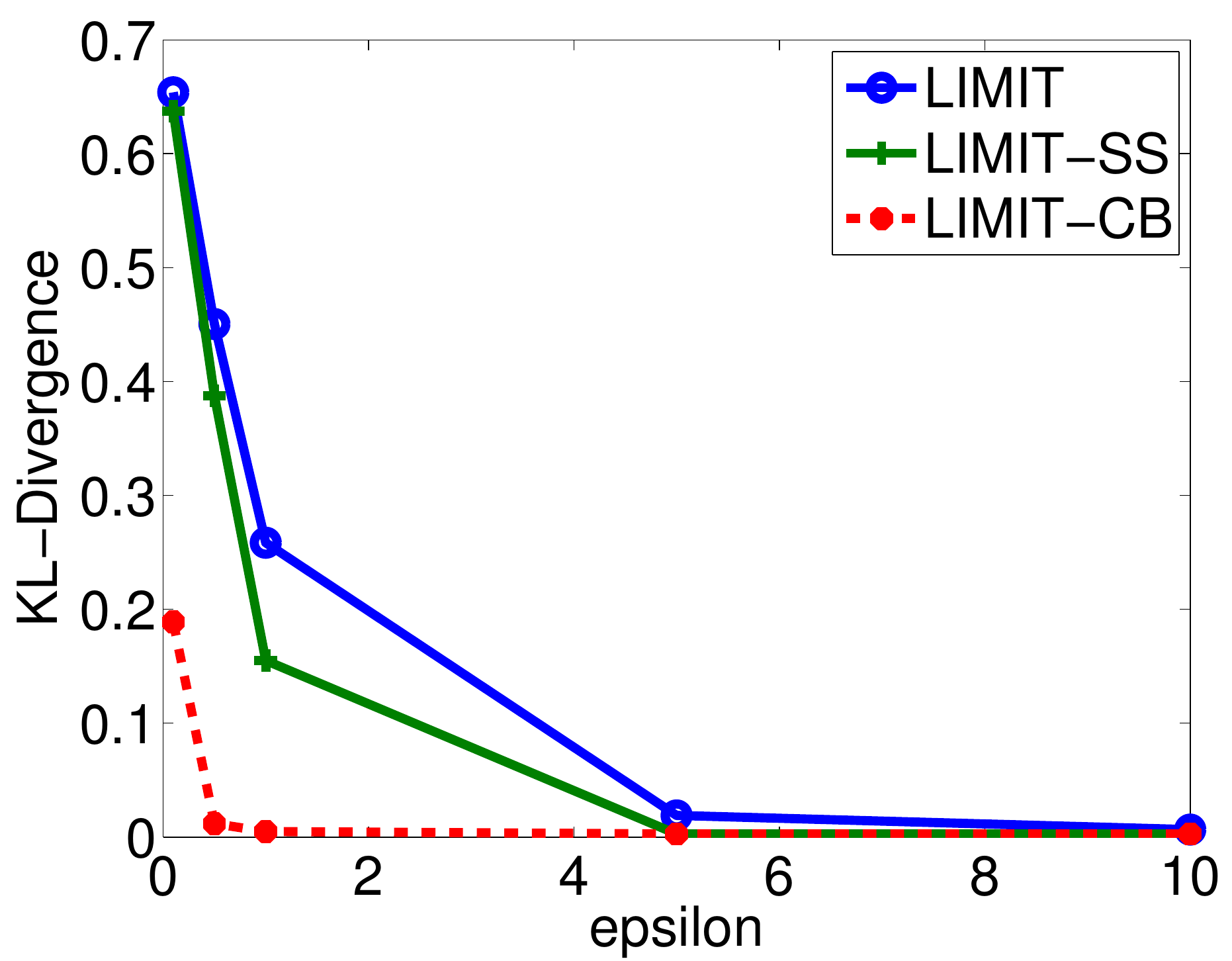}
		\subcaption{Dense}
		\label{fig:d-kl-vary-e}
	\end{minipage}
	\begin{minipage}[b]{.49\linewidth}
		\centering
		\includegraphics[width=1\textwidth]{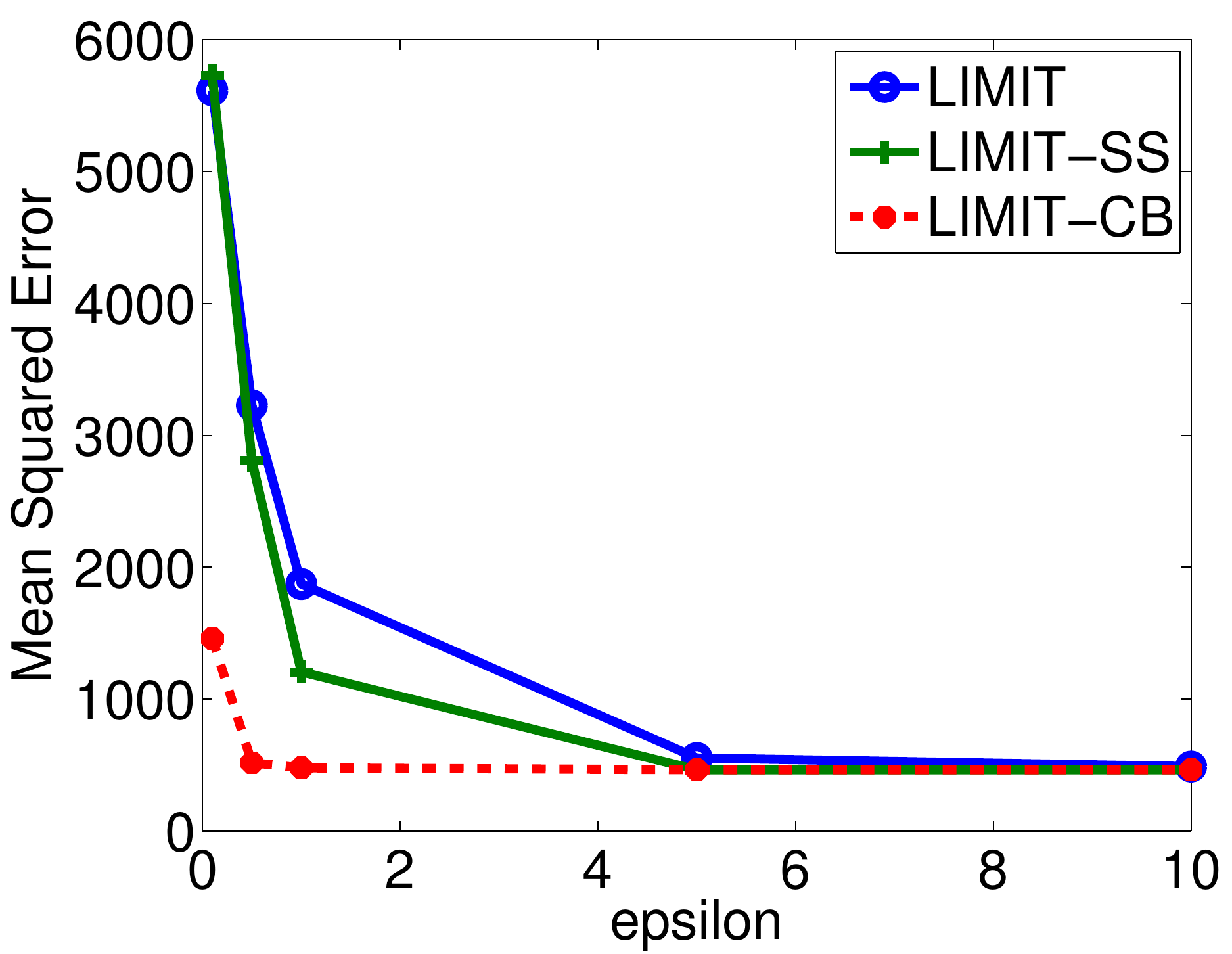}
		\subcaption{Dense}
		\label{fig:d-mse-vary-e}
	\end{minipage}
	\begin{minipage}[b]{.49\linewidth}
		\centering
		\includegraphics[width=1\textwidth]{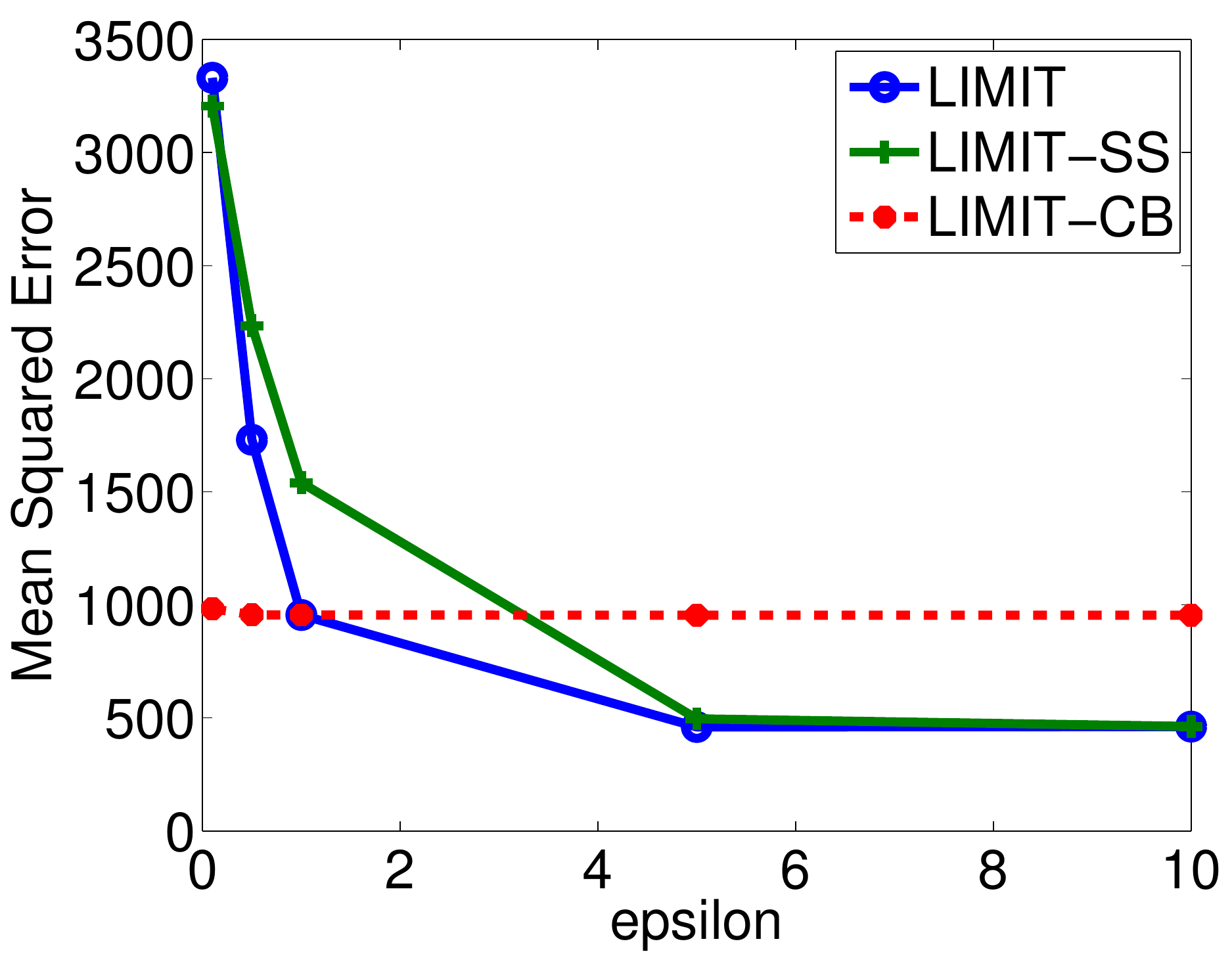}
		\subcaption{Sparse}
		\label{fig:s-mse-vary-e}
	\end{minipage}
	\begin{minipage}[b]{.49\linewidth}
		\centering
		\includegraphics[width=1\textwidth]{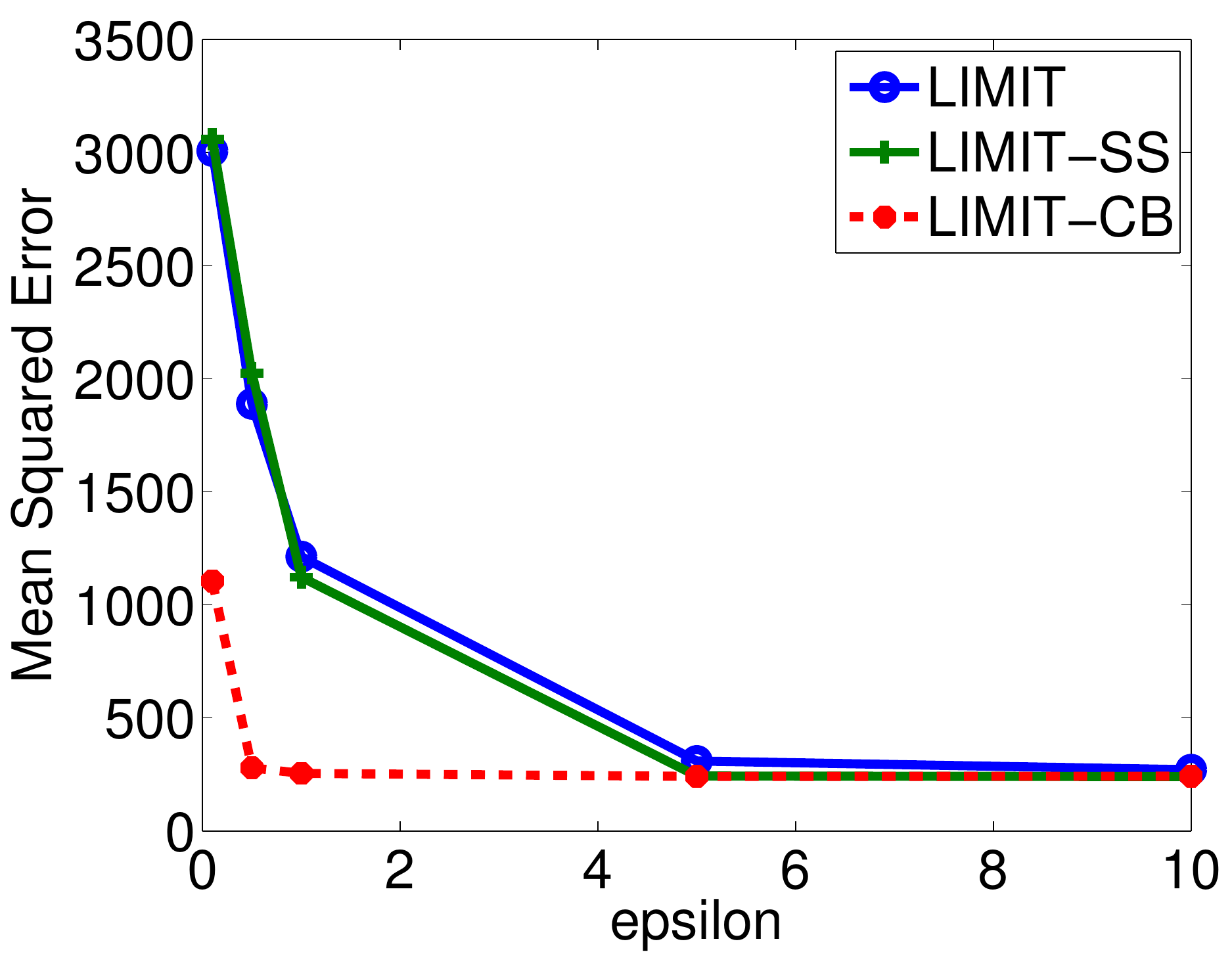}
		\subcaption{Sparse, Throwaway}
		\label{fig:s-mse-vary-e-p}
	\end{minipage}
	\caption{Varying $\epsilon$}
	\label{fig:vary-e}
\end{figure}

\subsubsection{The Effect of Varying $M$ and $C$}

We first evaluate the performance of our proposed techniques by varying threshold $M$.
For brevity, we present the results only for MSE, as similar results have been observed for KL-divergence.
Figure~\ref{fig:mse-vary-m} indicates the trade-off between the approximation error and the perturbation error. Our thresholding technique decreases $M$ to reduce the perturbation error, but at the cost of increasing the approximation error. As a result, at a particular value of $M$, the technique balances the two types of errors and thus minimizes the total error. For example, in Figure~\ref{fig:d-mse-vary-m-p}, {\LM} performs best at $M=5$, while {\LMS} and {\LMC} work best at $M\ge 30$. In Figure~\ref{fig:s-mse-vary-m-p}, however, {\LMS} performs best at $M=10$ and {\LMC} performs best at $M=20$.

\begin{figure}[ht]
	\begin{minipage}[b]{.49\linewidth}
		\centering
		\includegraphics[width=1\textwidth]{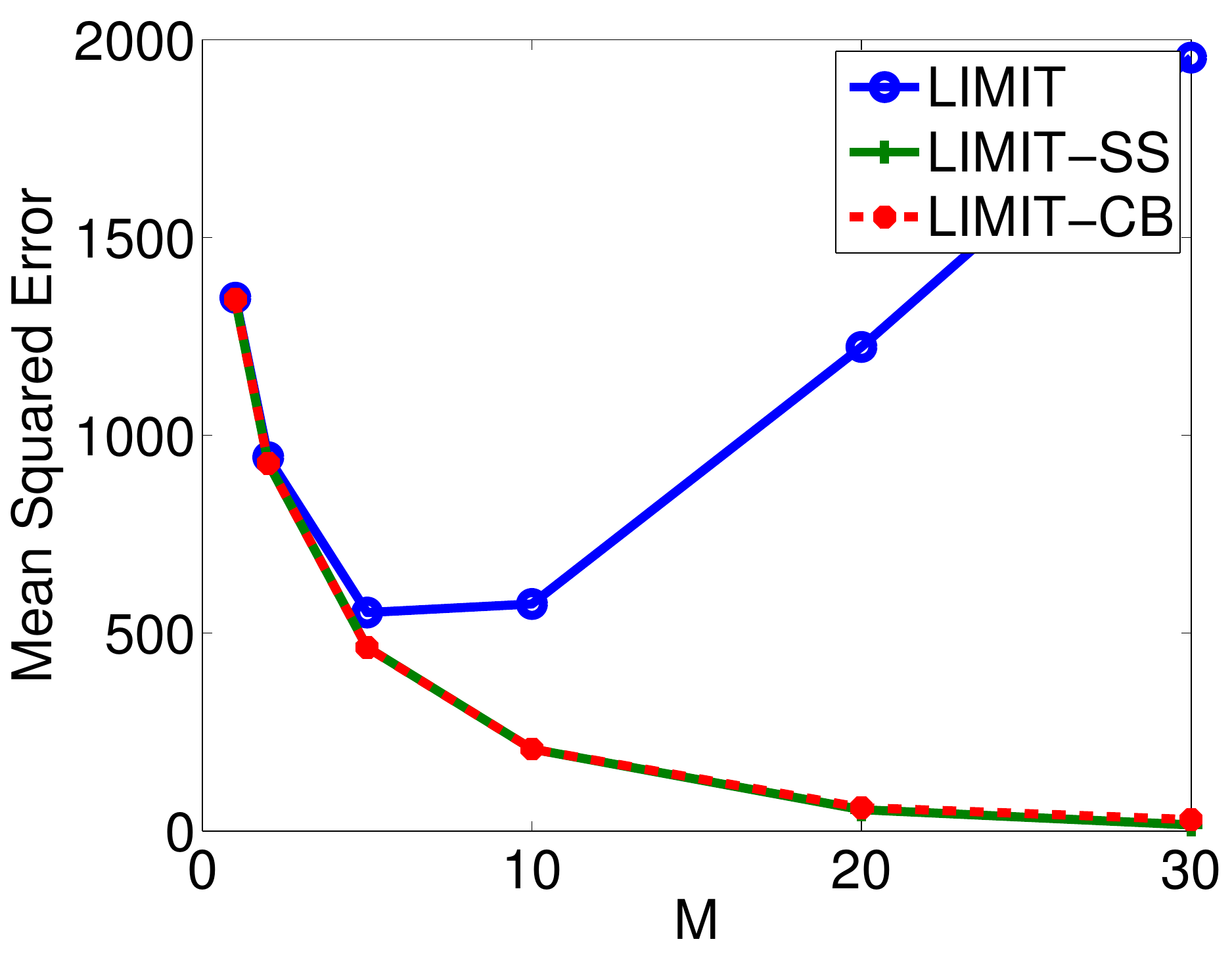}
		\subcaption{Dense}
		\label{fig:d-mse-vary-m-p}
	\end{minipage}
	\begin{minipage}[b]{.49\linewidth}
		\centering
		\includegraphics[width=1\textwidth]{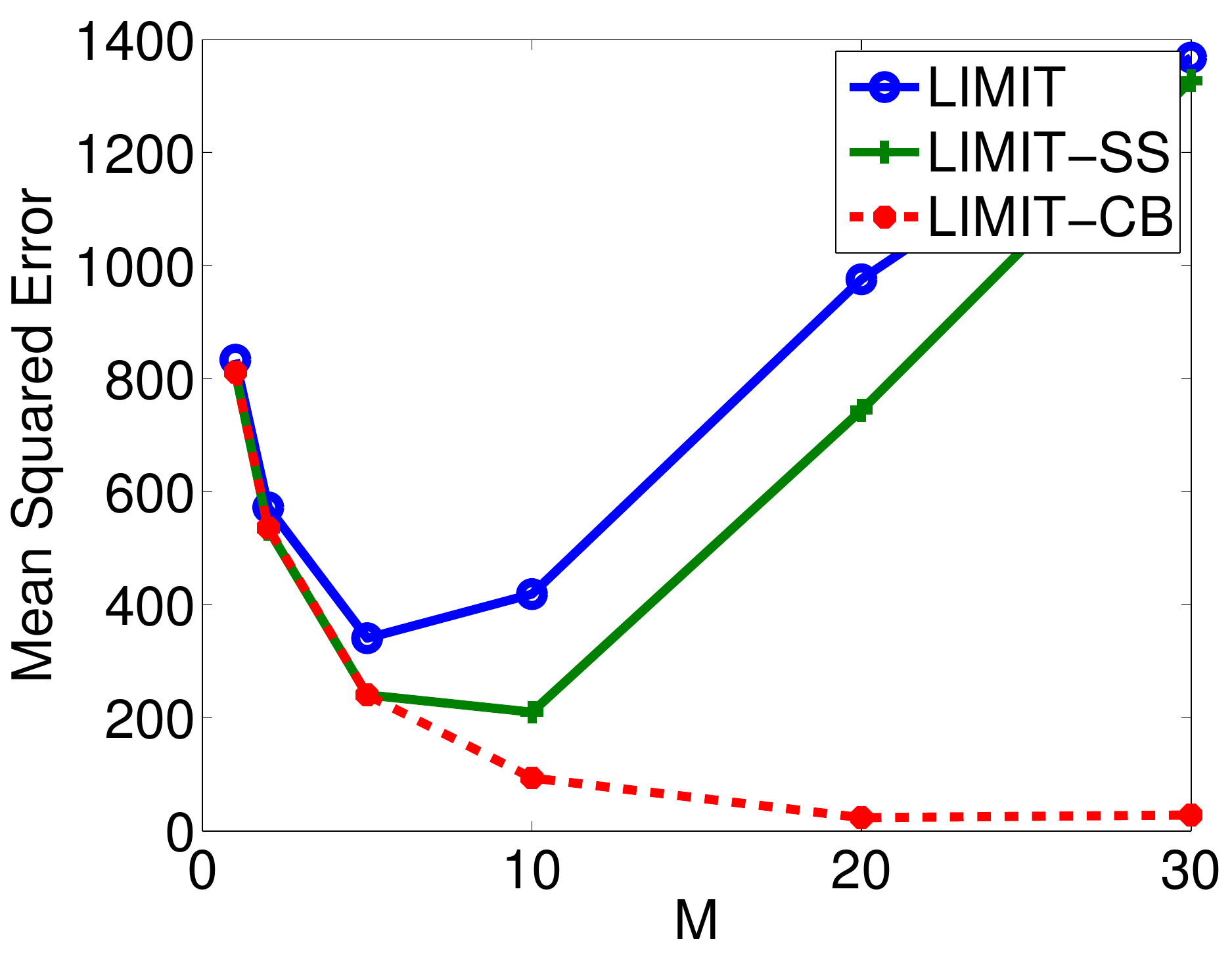}
		\subcaption{Sparse, Throwaway}
		\label{fig:s-mse-vary-m-p}
	\end{minipage}
	\caption{Varying $M$}
	\label{fig:mse-vary-m}
\end{figure}

We then evaluate the performance of our techniques by varying threshold $C$. Figure~\ref{fig:kl-vary-c} shows the results. For brevity, we only include KL-divergence results (MSE metric shows similar trends).
The graphs show that KL-divergence increases as $C$ grows. This observation suggests that $C$ should be set to a small number (less than 10). By comparing the effect of varying $M$ and $C$, we conclude that $M$ has more impact on the trade-off between the approximation error and the perturbation error.
\begin{figure}[ht]
	\begin{minipage}[b]{.49\linewidth}
		\centering
		\includegraphics[width=1\textwidth]{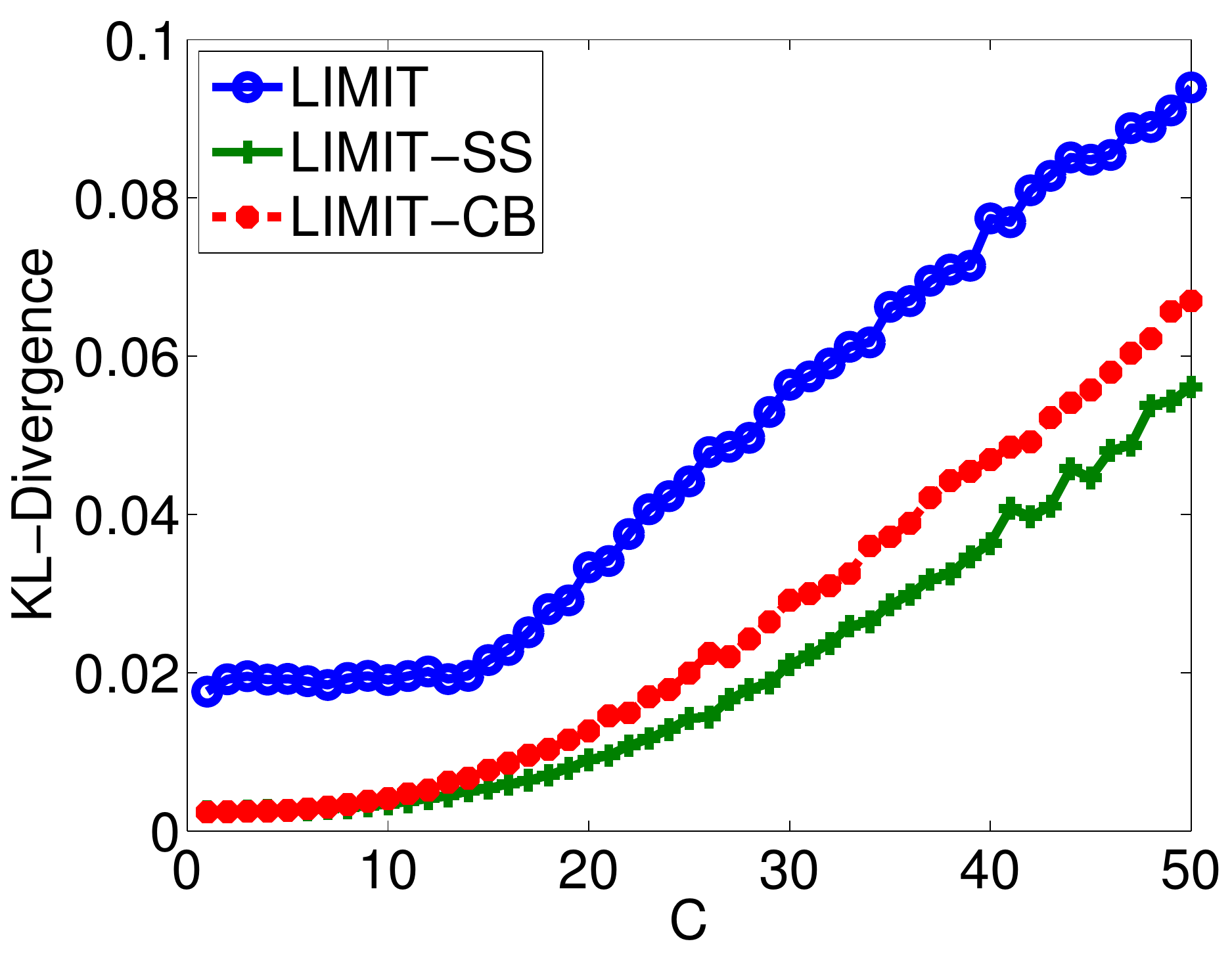}
		\subcaption{Dense}
		\label{fig:d-kl-vary-c-p}
	\end{minipage}
	\begin{minipage}[b]{.49\linewidth}
		\centering
		\includegraphics[width=1\textwidth]{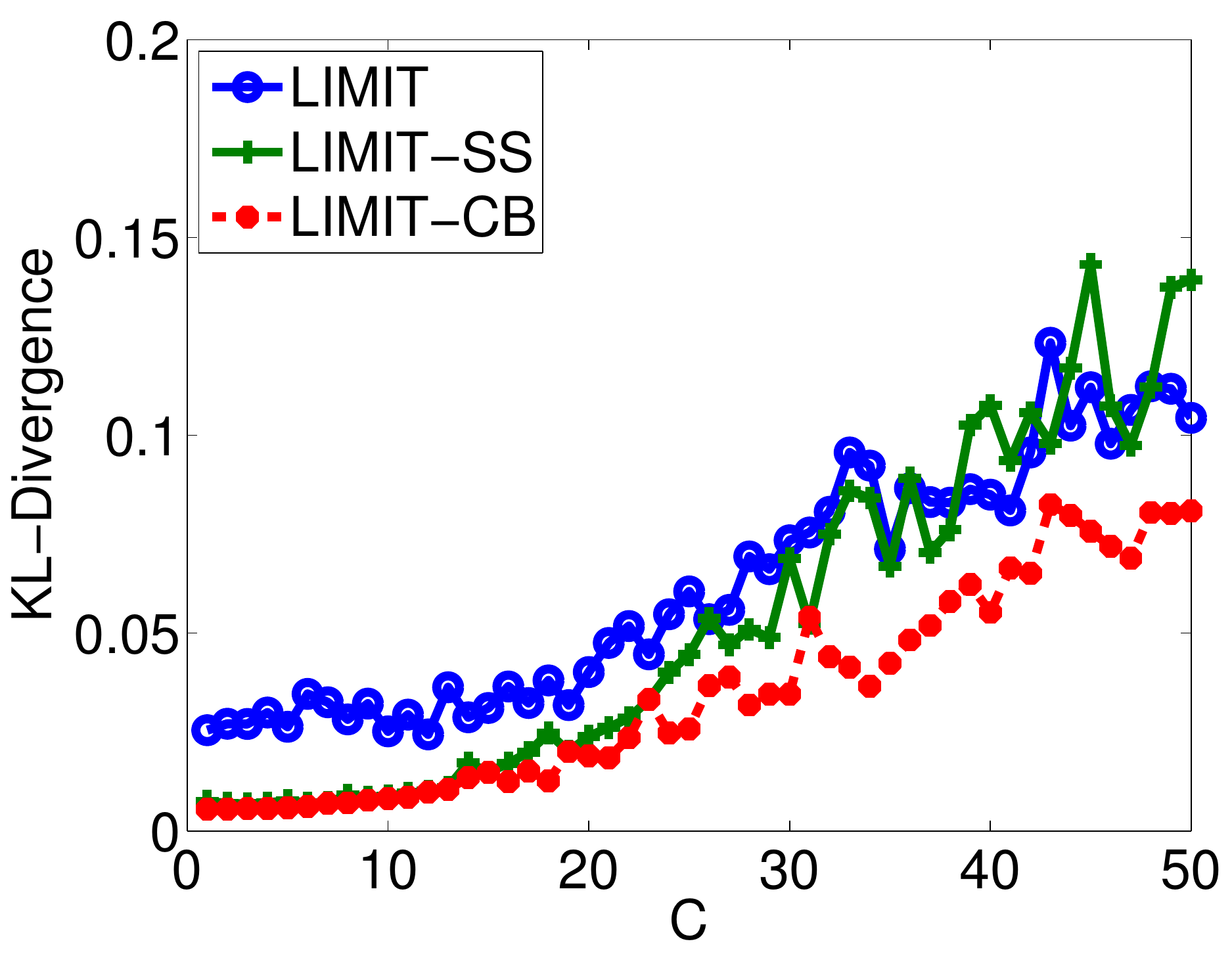}
		\subcaption{Sparse}
		\label{fig:s-kl-vary-c-p}
	\end{minipage}
	\caption{Varying $C$}
	\label{fig:kl-vary-c}
\end{figure}

\subsubsection{Results on the Gowalla Dataset}

In this section we evaluate the performance of our algorithms on the Gowalla dataset.
Figure~\ref{fig:go-raw} shows the distributions of noisy vs. actual location entropy. Note that we sort the locations based on their actual values of LE as depicted in Figure~\ref{fig:go-actual}. As expected, due to the sparseness of Gowalla (see Table~\ref{tab:datasets}),  the published values of LE in {\LM} and {\LMS} are scattered while those in {\LMC} preserve the trend in the actual data but at the cost of throwing away more locations (Figure~\ref{fig:go-limit-cb}). Furthermore, we conduct experiments on varying various parameters (i.e., $\epsilon,C,M,k$) and observe trends similar to the Sparse dataset; nevertheless, for brevity, we only show the impact of varying $\epsilon$ and $M$ in Figure~\ref{fig:go-mse}.

\begin{figure}[ht]
	\begin{minipage}[b]{.49\linewidth}
		\centering
		\includegraphics[width=1\textwidth]{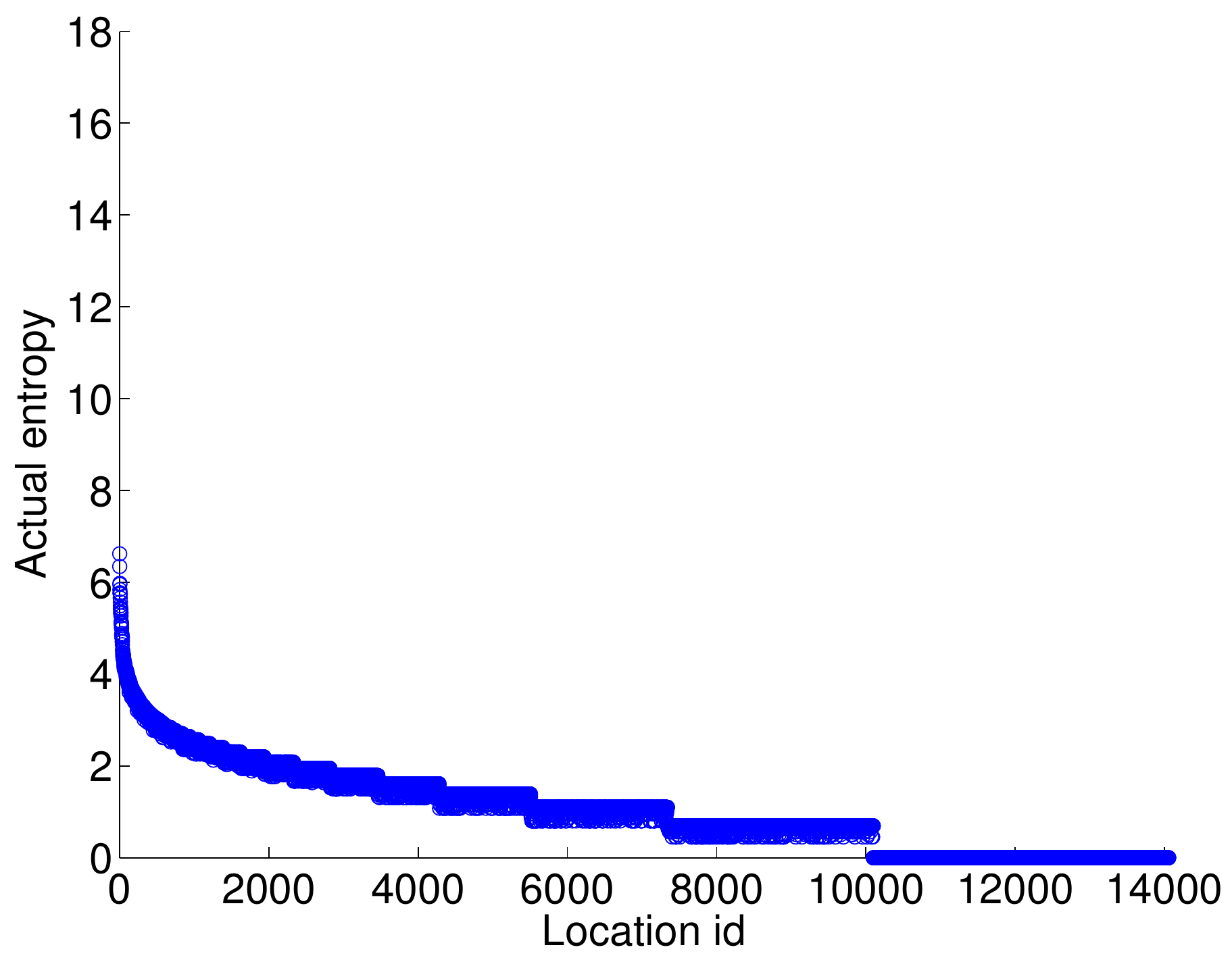}
		\subcaption{Actual}
		\label{fig:go-actual}
	\end{minipage}
	\begin{minipage}[b]{.49\linewidth}
		\centering
		\includegraphics[width=1\textwidth]{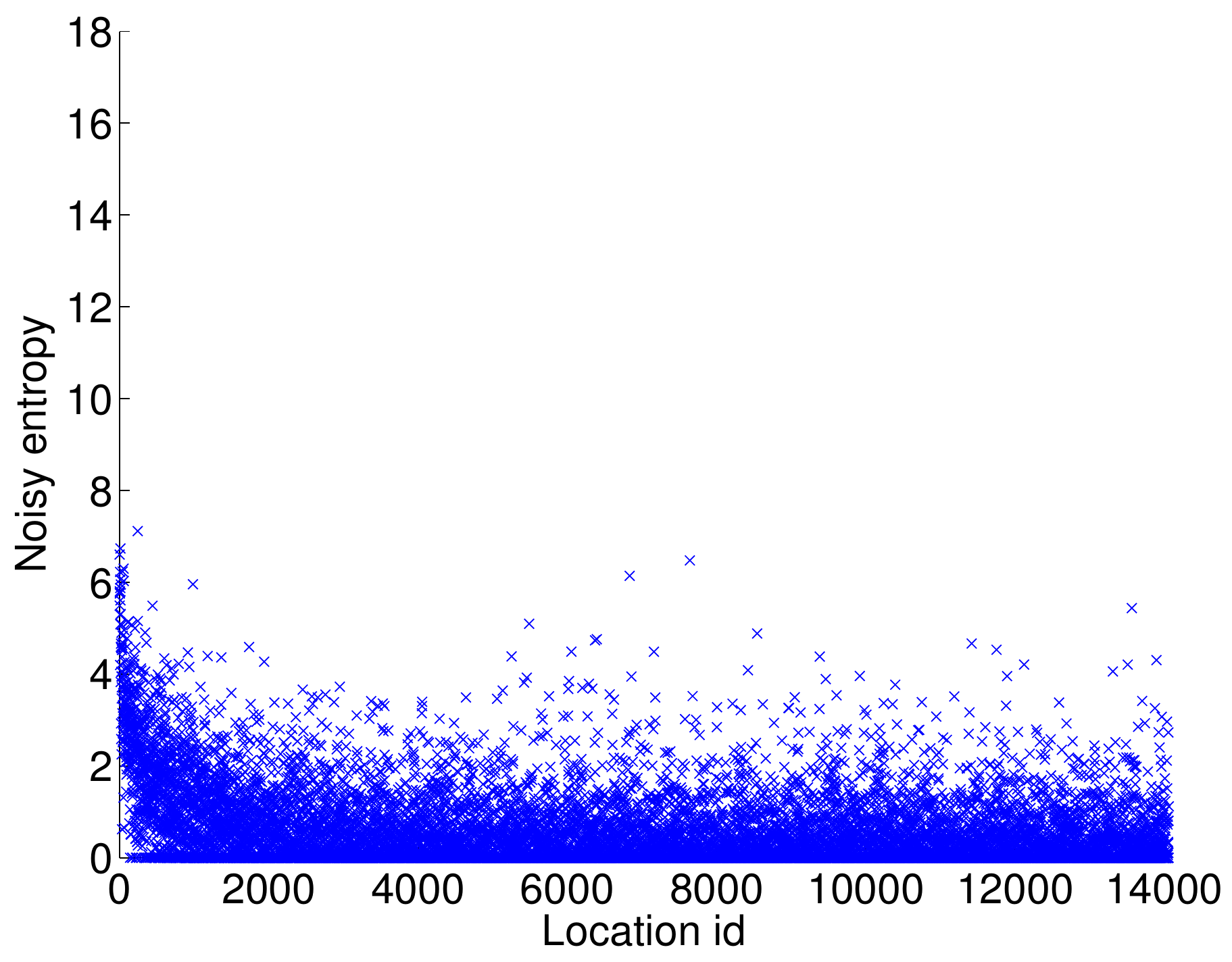}
		\subcaption{{\LM}}
		\label{fig:go-limit}
	\end{minipage}
	\begin{minipage}[b]{.49\linewidth}
		\centering
		\includegraphics[width=1\textwidth]{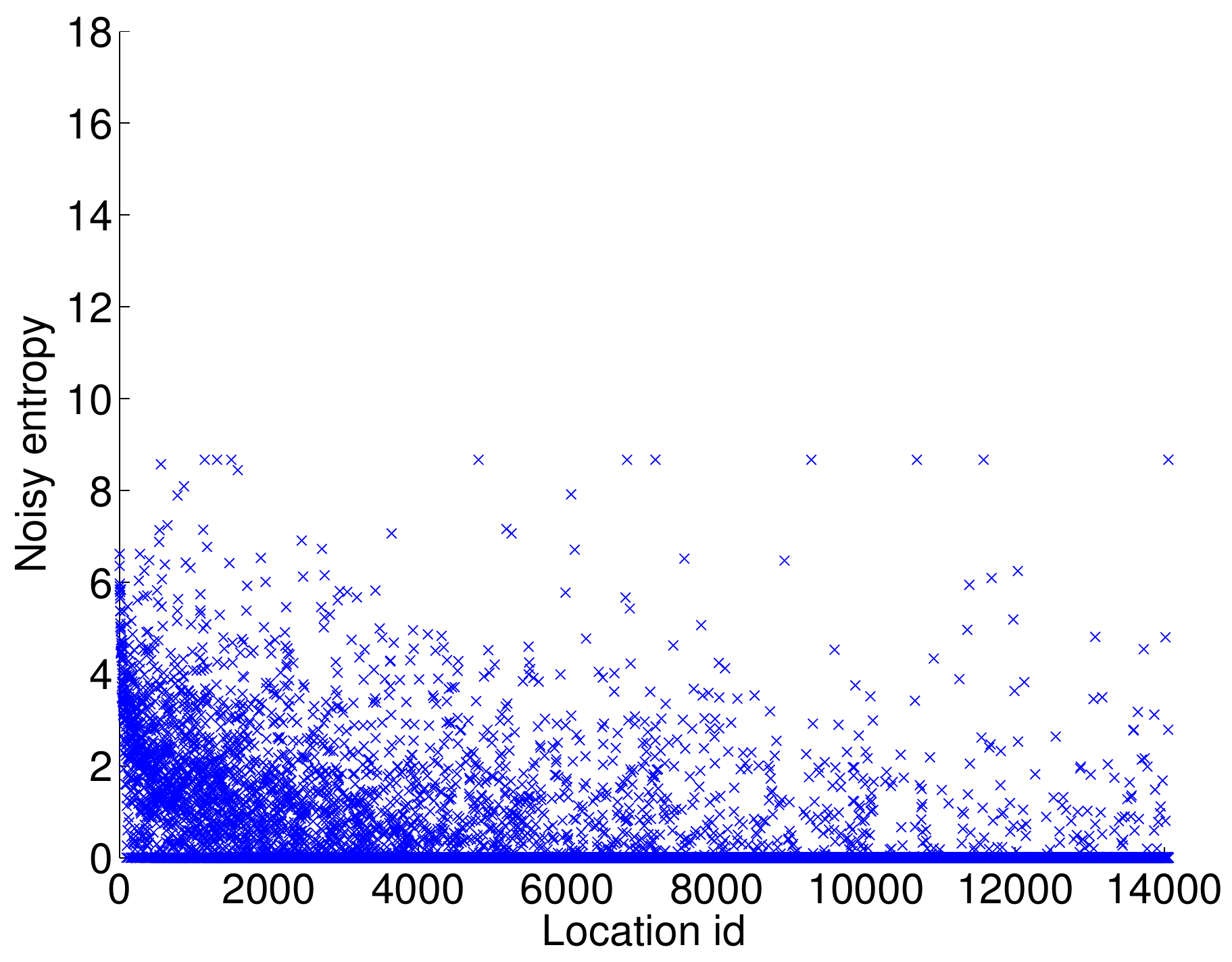}
		\subcaption{{\LMS}}
		\label{fig:go-limit-ss}
	\end{minipage}
	\begin{minipage}[b]{.49\linewidth}
		\centering
		\includegraphics[width=1\textwidth]{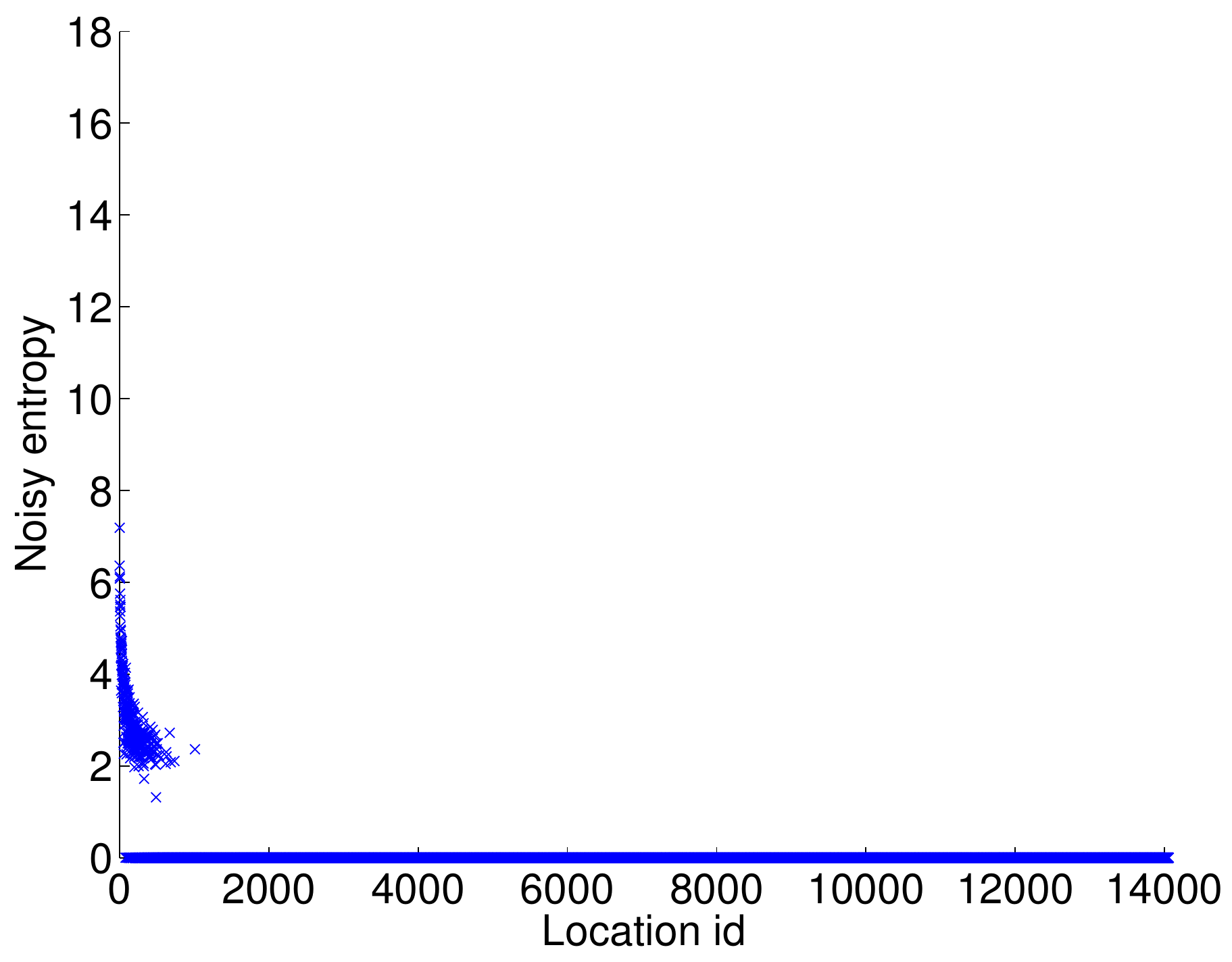}
		\subcaption{{\LMC}}
		\label{fig:go-limit-cb}
	\end{minipage}
	\caption{Comparison of the distributions of noisy vs. actual location entropy on Gowalla, $M=5$.}
	\label{fig:go-raw}
\end{figure}

\begin{figure}[ht]
	\begin{minipage}[b]{.49\linewidth}
		\centering
		\includegraphics[width=1\textwidth]{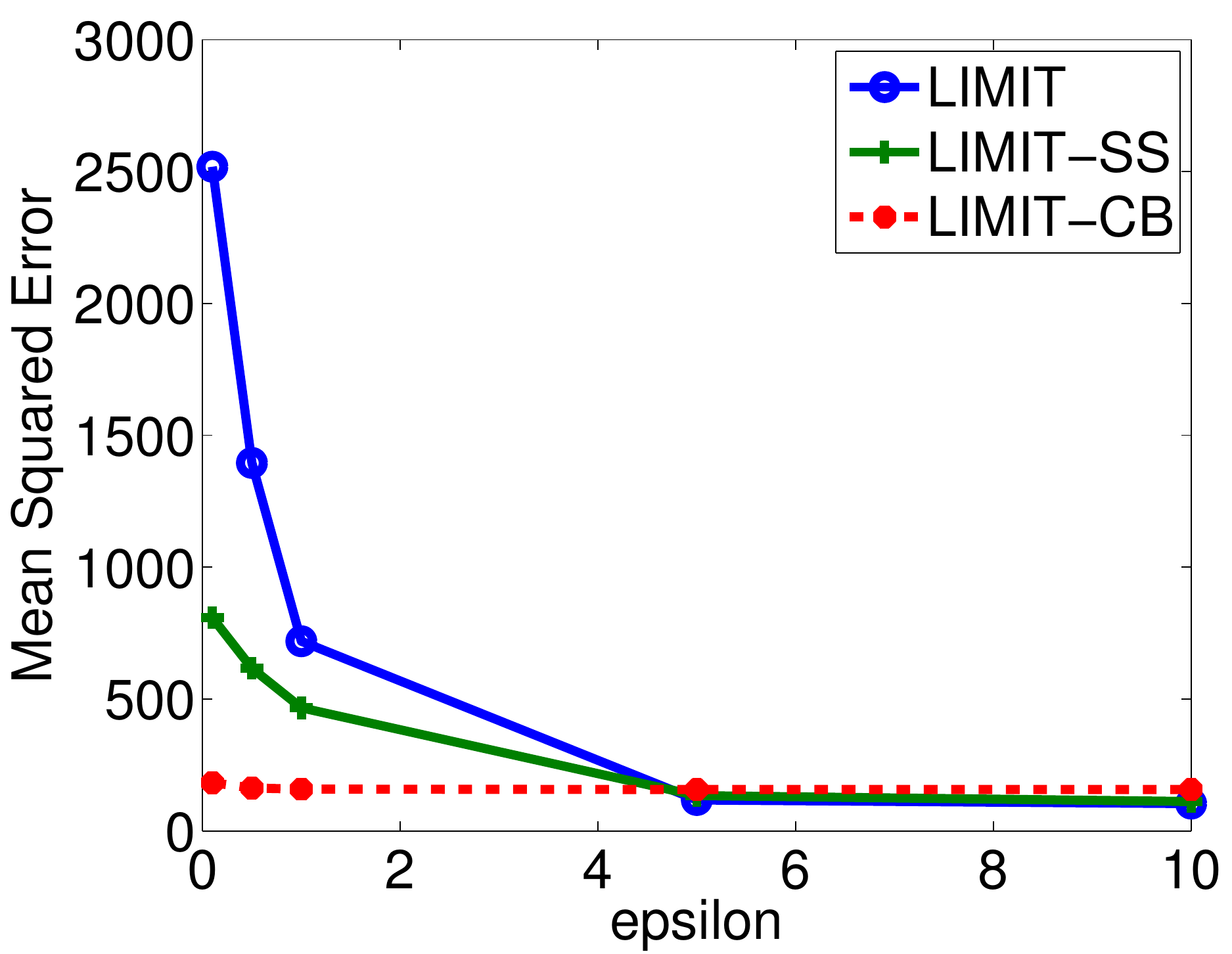}
		\subcaption{Vary $\epsilon$}
		\label{fig:go-mse-vary-e}
	\end{minipage}
	\begin{minipage}[b]{.49\linewidth}
		\centering
		\includegraphics[width=1\textwidth]{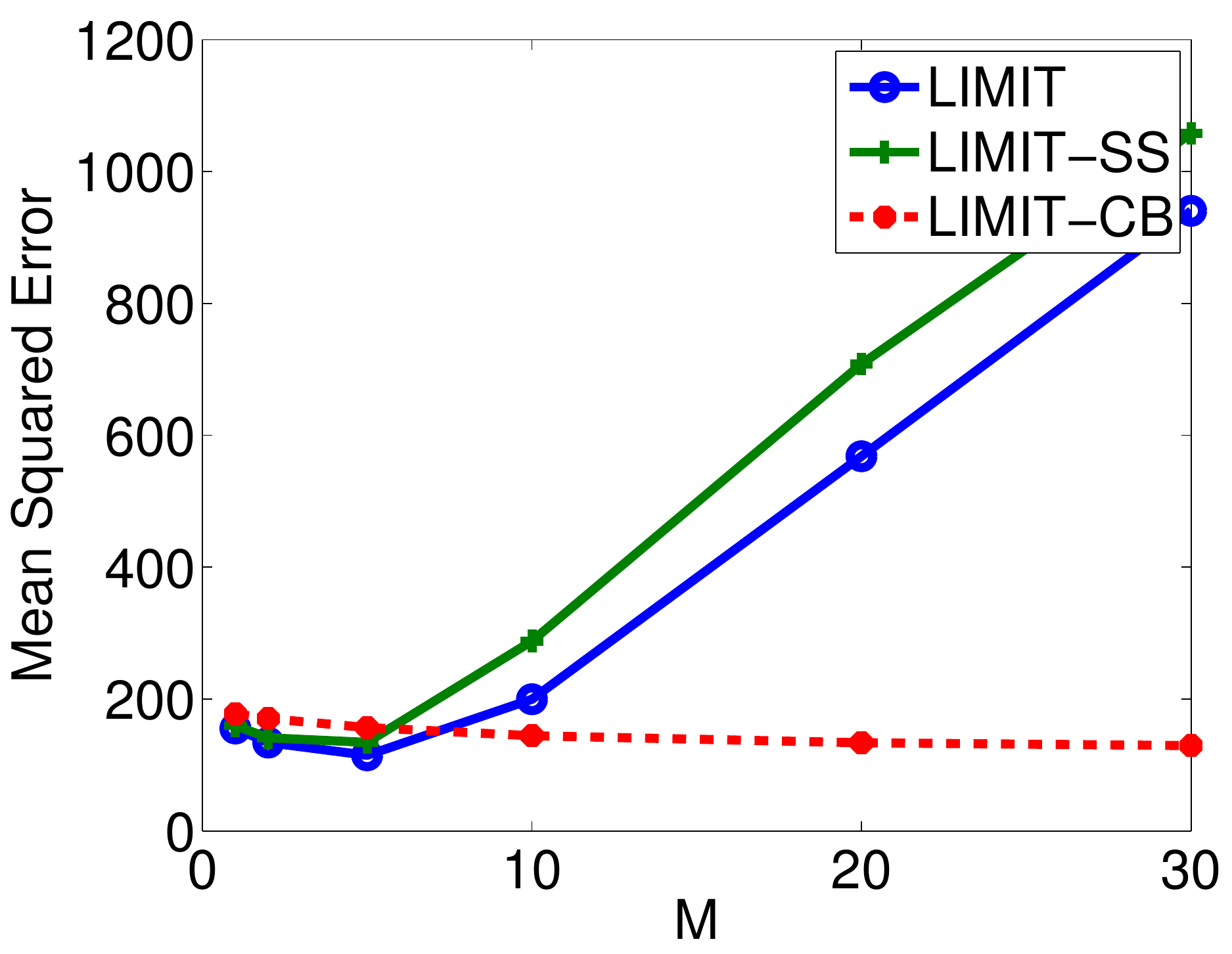}
		\subcaption{Vary $M$}
		\label{fig:go-mse-vary-m}
	\end{minipage}
	\caption{Varying $\epsilon$ and $M$ (Gowalla).}
	\label{fig:go-mse}
\end{figure}

\textbf{Recommendations for Data Releases:}
We summarize our observations and provide guidelines for choosing appropriate techniques and parameters.
{\LMC} generally performs best on sparse datasets because it only focuses on publishing the locations with large visits. Alternatively, if the dataset is dense, {\LMS} is recommended over {\LMC} since there are sufficient locations with large visits.
A dataset is dense if most locations (e.g., 90\%) have at least $n_{CB}$ users, where $n_{CB}$ is the threshold for choosing {\LMC}.  
Particularly, given fixed parameters $C, \epsilon, \delta, k$---$n_{CB}$ can be found by comparing the global sensitivity of {\LMC} and the precomputed smooth sensitivity.
In Figure~\ref{fig:sensitivity_m5}, $n_{CB}$ is a particular value of $n$ where $SS(C, n_{CB})$ is smaller than the global sensitivity of {\LMC}. In other words, the noise magnitude required for {\LMS} is smaller than that for {\LMC}.
Regarding the choice of parameters, to guarantee strong privacy protection, $\epsilon$ should be as small as possible, while the measured utility metrics are practical. Finally, the value of $C$ should be small ($ \le 10$), while the value of $M$ can be tuned to achieve maximum utility.

\section{Related Work}
\label{sec:related}

\textbf{Location privacy} has largely been studied in the context of location-based services, participatory sensing and spatial crowdsourcing. Most studies use the model of spatial $k$-anonymity~\cite{sweeney2002k}, where the location of a user is hidden among $k$ other users~\cite{gruteser2003anonymous,mokbel2006new}. However, there are known attacks on $k$-anonymity, e.g., when all $k$ users are at the same location.
Nevertheless, such techniques assume a centralized architecture with a trusted third party, which is a single point of attack.
Consequently, a technique that makes use of cryptographic techniques such as private information retrieval is proposed that does not rely on a trusted third party to anonymize locations~\cite{ghinita2008private}.
Recent studies on location privacy have focused on leveraging differential privacy (DP) to protect the privacy of users~\cite{to2014framework,xiao2015protecting}.

\textbf{Location entropy} has been extensively used in various areas of research, including multi-agent systems~\cite{van2001entropy}, wireless sensor networks~\cite{wang2004entropy}, geosocial networks~\cite{cranshaw2010bridging,cho2011friendship,pham2013ebm}, personalized web search~\cite{leung2010personalized}, image retrieval~\cite{yanai2009visual} and spatial crowdsourcing~\cite{kazemi2012geocrowd,to2015server,to2016real}, etc.
The study that most closely relates to ours focuses on privacy-preserving location-based services in which location entropy is used as the measure of privacy or the attacker's uncertainty~\cite{xu2009feeling,toch2010empirical}.
In~\cite{xu2009feeling}, a privacy model is proposed that discloses a location on behalf of a user only if the location has at least the same popularity (quantified by location entropy) as a public region specified by a user.
In fact, locations with high entropy are more likely to be shared (checked-in) than places with low entropy~\cite{toch2010empirical}.
However, directly using location entropy compromises the privacy of individuals. For example, an adversary certainly knows whether people visiting a location based on its entropy value, e.g., low value means a small number of people visit the location, and if they are all in a small geographical area, their privacy is compromised.
To the best of our knowledge, there is no study that uses differential privacy for publishing location entropy, despite its various applications that can be highly  instrumental  in protecting the privacy of individuals.

\section{Conclusions}
\label{sec:conclude}

We introduced the problem of publishing the entropy of a set of locations according to differential privacy. A baseline algorithm was proposed based on the derived tight bound for global sensitivity of the location entropy. We showed that the baseline solution requires an excessively high amount of noise to satisfy $\epsilon$-differential privacy, which renders the published results useless. A simple yet effective truncation technique was then proposed to reduce the sensitivity bound by two orders of magnitude, and this enabled publication of location entropy with reasonable utility. The utility was further enhanced by adopting smooth sensitivity and crowd-blending. We conducted extensive experiments and concluded that the proposed techniques are practical.

\section*{Acknowledgement}
We would like to thank Prof. Aleksandra Korolova for her constructive feedback during the course of this research.

This research has been funded in in part by NSF grants IIS-1320149 and CNS-1461963, National Cancer Institute, National Institutes of Health, Department of Health and Human Services, under Contract No. HHSN261201500003B, the USC Integrated Media Systems Center, and unrestricted cash gifts from Google, Northrop Grumman and Oracle. Any opinions, findings, and conclusions or recommendations expressed in this material are those of the authors and do not necessarily reflect the views of any of the sponsors.

\bibliographystyle{abbrv}
\begin{footnotesize}
\bibliography{sigproc-sp}
\end{footnotesize}

\balance

\end{document}